\documentclass[10pt]{article}

\usepackage[margin=1in]{geometry}
\usepackage{natbib}
\usepackage{parskip}

\usepackage{algorithm, algorithmicx, algpseudocode}
\usepackage{amssymb}
\usepackage{amsmath}
\usepackage{graphicx}
\usepackage{mathtools}
\usepackage{needspace} 
\usepackage{multicol}
\usepackage{listings}
\usepackage{amsthm}
\usepackage[shortlabels]{enumitem}
\usepackage[italicdiff]{physics}
\usepackage{cancel}
\usepackage[dvipsnames]{xcolor}
\usepackage{verbatim}
\usepackage{comment}
\usepackage{array}
\usepackage{tikz-cd}
\usepackage{import}
\usepackage{booktabs}
\usepackage{comment}
\usepackage{array}
\usepackage[colorlinks, linkcolor=blue, citecolor=ForestGreen]{hyperref}
\usepackage{forest}
\usepackage{float}
\usepackage{xspace}
\usepackage{aliascnt}
\usepackage{wrapfig}
\usepackage{thm-restate}
\usepackage{adjustbox}
\usepackage{bm}
\usepackage{bbm}
\usepackage{complexity}
\usepackage{makecell}
\usepackage{nicefrac}
\usetikzlibrary{calc, positioning, external, arrows.meta}

\usepackage{cleveref}

\allowdisplaybreaks

\let\E\relax

\DeclareMathOperator*{\argmax}{argmax}
\DeclareMathOperator*{\E}{\mathbb E}

\let\cite\citep

\newfloat{protocol}{tbp}{lop}
\floatname{protocol}{Protocol}

\usepackage{pgfplots}
\usepgfplotslibrary{groupplots}
\pgfplotsset{compat=1.18}

\newtheorem{theorem}{Theorem}
\numberwithin{theorem}{section}
\newtheorem*{theorem*}{Theorem}

\newtheorem{lemma}[theorem]{Lemma}

\theoremstyle{definition}
\newtheorem{definition}[theorem]{Definition}
\theoremstyle{definition}
\newtheorem{example}{Example}[section]

\usepackage{amsmath,amsfonts,bm}

\def\1{\bm{1}}

\def\vt{{\bm{t}}}

\DeclareMathAlphabet{\mathsfit}{\encodingdefault}{\sfdefault}{m}{sl}
\SetMathAlphabet{\mathsfit}{bold}{\encodingdefault}{\sfdefault}{bx}{n}

\usepackage{caption}
\makeatletter
\def\thanks#1{\protected@xdef\@thanks{\@thanks
        \protect\footnotetext{#1}}}
\makeatother

\title{Online Resource Allocation With General Constraints}
\author{
    Eleonora Fidelia Chiefari \\
    \texttt{eleonorafidelia.chiefari@polimi.it} \\
    Politecnico di Milano 
\and
    Francesco Emanuele Stradi \\
	\texttt{francescoemanuele.stradi@polimi.it} \\
	Politecnico di Milano 
\and
	Matteo Castiglioni\\
	\texttt{matteo.castiglioni@polimi.it} \\
	Politecnico di Milano 
\and 
	Alberto Marchesi\\
	\texttt{alberto.marchesi@polimi.it} \\
	Politecnico di Milano 
}
\date{\today}

\begin{document}

\maketitle

\begin{abstract}
	\emph{Online resource allocation} (ORA) is a fundamental framework for sequential decision-making problems under \emph{budget constraints}, with applications ranging from online advertising to revenue management. In this work, we study a broader setting that includes \emph{both} budget constraints and \emph{general constraints}, extending the classical budget-only model. This extension is essential for modeling critical economic requirements, such as Return-on-Investment (ROI) constraints.
	We develop an algorithm that achieves best-of-both-world guarantees within this generalized framework. In particular, against a dynamic benchmark, our algorithm achieves \(\widetilde{\mathcal O}(\sqrt{T})\) regret in the \emph{stochastic} regime and \(\alpha\)-regret of order \(\widetilde{\mathcal O}(\sqrt{T})\) in the \emph{adversarial} regime, where \(\alpha\) depends on the feasibility margin of the corresponding offline problem. At the same time, our algorithm guarantees strict satisfaction of the budget constraints and \(\widetilde{\mathcal O}(\sqrt{T})\) cumulative violation for the general ones.
	From a technical perspective, introducing general constraints alongside budgets precludes the use of standard budget-focus methods. While budget methods rely on a zero-consumption ``safe'' action to ensure feasibility, general constraints are much less ``aligned'' towards feasibility. We overcome these difficulties with a new analysis that exploits \emph{weak adaptivity} to get boundedness of the Lagrangian multipliers and best-of-both-world guarantees.
\end{abstract}

\newpage
\tableofcontents
\newpage

\section{Introduction} 

\emph{Online resource allocation} (ORA) serves as a fundamental framework for sequential decision-making under \emph{budget constraints}. In these settings, requests arrive sequentially, and, for each request, the decision maker must select an appropriate action. A defining characteristic of ORA, which distinguishes it from many online learning paradigms, is that the learner possesses exact knowledge of the current round \emph{before} picking an action: upon a request's arrival, the resource consumption and reward associated with each available action are fully known. Hence, the challenge stems from the uncertainty about future requests, requiring a strategy that balances immediate rewards against the preservation of resources for future arrivals. 

Such ORA problems arise in various real-world scenarios, such as, \emph{e.g.}, online advertising~\citep{mehta}, revenue management~\citep{talluri2004theory, ball2009toward}, and online auctions~\citep{zhou,balseiro2019learning}. 
A wide range of works have addressed ORA problems in various stochastic settings~\citep{devanur2009the,feldman2010,KesselheimRTV13,agrawal2014dynamic,devanur_nearoptimal,li2020,sun2022nearoptimalprimaldualalgorithmsquantitybased}. Only a few works have gone beyond these settings: some consider non-stationary models~\citep{Ciocan2011DynamicAP, esfandiari}, while others address adversarial ones, albeit under specific structural assumptions~\citep{mehta, buchbinder, balseiro2019learning, stradino}.
\citet{balseiro2023best} are the first to consider general ORA problems in both stochastic and adversarial regimes by designing a \emph{best-of-both-world} algorithm. Their algorithm employs a dual-based approach that simultaneously achieves \( \widetilde{\mathcal{O}}(\sqrt{T}) \) regret in the stochastic regime and \( \alpha \)-regret of order \( \widetilde{\mathcal{O}}(\sqrt{T}) \) in the adversarial setting, where \( \alpha \) is an instance-dependent constant parameter related to the per-round budget.

Despite these progresses, most existing results focus exclusively on budget constraints. A notable exception is the work of \cite{feng2023onlinebiddingalgorithmsreturnonspend}, which, however, focuses only on Return-on-Investment (ROI) constraints within stochastic settings.
Despite that, many real-world applications involve additional \emph{long-term constraints}---such as fairness, risk, or performance requirements---that must be satisfied cumulatively over time. A fundamental difference between general and budget constraints is that while budget constraints come with an \emph{a priori} known feasibility margin (the per-round budget), general constraints often have an \emph{unknown} feasibility margin. Furthermore, such constraints are inherently less ``aligned'' with one another. For instance, an action that maximizes ROI may simultaneously violate fairness. This tension requires algorithms to allow violations in some rounds to be later compensated by strict satisfaction in others, thereby forcing learning algorithms to tolerate temporary violations. These dynamics introduce additional challenges that prevent results for budget-constrained settings from being directly applied to problems involving general constraints.

\subsection{Original Contributions}

In this work, we study generalized ORA problems that incorporate \emph{both budget and general long-term constraints}. To the best of our knowledge, we propose the first algorithm that attains \emph{best-of-both-world} guarantees in this setting. 
Notably, while the performance of our algorithm in adversarial environments depends on the Slater's parameter $\rho$---which encodes the feasibility margin of the problem---the algorithm operates without prior knowledge of this parameter.
Our algorithm achieves the following performance guarantees: 
\begin{itemize}[leftmargin=0.5cm]
	\item \( \widetilde{\mathcal{O}}(\sqrt{T}) \) cumulative expected regret in the \emph{stochastic} setting, where requests are drawn i.i.d.\ across rounds from an unknown fixed probability distribution; 
	\item \( \widetilde{\mathcal{O}}(\sqrt{T}) \) cumulative \( \alpha \)-regret in the \emph{adversarial} setting, where requests are chosen arbitrarily in advance by an adversary, with \( \alpha := \nicefrac{\rho}{1+\rho} \) an instance-dependent constant.
\end{itemize}
At the same time, regardless of the underlying environment, the algorithm: 
\begin{itemize}[leftmargin=0.5cm]
	\item strictly satisfies all the budget constraints;
	\item guarantees \( \widetilde{\mathcal{O}}(\sqrt{T}) \) cumulative violation of the general constraints.
\end{itemize}

Our algorithm employs a \emph{dual-based} approach, wherein Lagrangian multipliers are updated via an \emph{online gradient descent} (OGD) scheme and primal actions are selected greedily in order to maximize the instantaneous Lagrangian.
The central challenge in our analysis lies in bounding the magnitude of these Lagrangian multipliers throughout the primal-dual dynamics. Crucially, we do not impose an external upper bound on the dual variable, as such a bound depends on the unknown Slater's parameter $\rho$.
Instead, we show that the multipliers are naturally bounded under the learning dynamics.

The crucial difference with respect to the classical budget-only setting~\citep{balseiro2023best} stems from the absence of a point-wise feasibility-maximizing action, a.k.a.\ the ``void'' action.
In our more general setting, constraints allow for cross-constraint compensation, which can prevent the violations from vanishing and the multipliers from ceasing to grow. Indeed, since the constraints might not be aligned, it could be the case that multipliers are large, but the most profitable decision for the primal is to keep violating a constraint to satisfy another one by a huge margin.
Despite that, exploiting the \emph{weak adaptivity} of OGD, we are still able to prove that, when the multipliers exceed a certain threshold, the no-regret guarantees of the OGD dual update prevent the primal, which plays greedily, from indefinitely exploiting this cross-constraint compensation.
This allows us to prove that the norm of the multipliers cannot exceed a threshold inversely proportional to the feasibility margin $\rho$ with high probability, completely bypassing the need for a point-wise feasibility-maximizing action.

\subsection{Related Works}
A wide range of works on the ORA problem focuses on stochastic input models. Early contributions \citep{devanur2009the, feldman2010} introduced algorithms that learn dual prices from an initial sample and exploit them. Although conceptually simple, these algorithms attained suboptimal guarantees. Subsequent works \citep{agrawal_nearoptimal, KesselheimRTV13, devanur_nearoptimal} improved performance by sequentially solving linear programming relaxations, achieving $\widetilde{\mathcal O}(\sqrt{T})$ regret. 
A growing line of research aims to design computationally efficient algorithms, avoiding heavy optimization steps. Recent works \citep{li2020, sun2022nearoptimalprimaldualalgorithmsquantitybased} employ primal-dual or gradient-style updates, attaining strong theoretical guarantees although relying on specific problem structures or assumptions.

More recent works \citep{Ciocan2011DynamicAP, esfandiari} explore non-stationary models, filling the gap between stochastic and adversarial ones. For what concerns the adversarial setting, early works \citep{mehta, buchbinder} establish optimal constant-factor guarantees for the specific cases of online matching and AdWords problem. These results assume linear relationship between reward and consumption and cannot be directly applied to more general allocation settings.
The first work providing guarantees in both the stochastic and the adversarial setting is \citet{balseiro2023best}. The authors develop a dual-based algorithm that achieves simultaneously $\widetilde{\mathcal O}(\sqrt{T})$ regret in the stochastic setting, $\widetilde{\mathcal{O}}(\sqrt{T})$ $\alpha$-regret in the adversarial setting and in the non-stationary setting.

In the specific domain of online bidding, \citet{balseiro2019learning} provide an adaptive pacing strategy achieving $\widetilde{\mathcal O}(\sqrt{T})$ regret bound in the stochastic setting, under suitable assumptions. In the adversarial setting, it establish a fundamental limitation, showing that no algorithm can achieve a competitive ratio better than $\nicefrac{v}{\rho}$, where $v$ is the uniform upper bound on the advertiser's value. Similarly, \citet{feng2023onlinebiddingalgorithmsreturnonspend} consider an online bidding problem under a RoS constraint, where the goal is to maximize the total value obtained from ad placements while ensuring that the ratio between value and expenditure exceeds a given threshold. Their analysis relies on the structure of the RoS constraint to obtain optimal regret guarantees. Alternative approaches have also studied competitive guarantees under budget constraints. In particular, \citet{zhou} design online algorithms whose performance depends on the structure of value-to-weight ratios of the incoming items. 

A closely related line of research focuses on online decision-making subject to long-term constraints. Early works, such as \citep{mahdavi2012tradingregretefficiencyonline, jenatton2015adaptivealgorithmsonlineconvex} consider fixed constraints, achieving sublinear regret while controlling cumulative violation. Subsequent works, such as \citep{Mannor, yu2017onlineconvexoptimizationstochastic, pmlr-v70-sun17a}, extend these results to time-varying constraints. The \emph{Bandits with Knapsacks} (BwK) framework,introduced by \citet{Badanidiyuru_2013}, focuses on budget constraints. Several works, such as \citep{agrawal_globalconvex} have improved the regret guarantees in the stochastic setting. More recently, primal-dual algorithms for BwK have been extended to adversarial and non-stationary environments \citep{immorlica2023adversarialbanditsknapsacks, kesselheim2020, castiglioni2022online}. Most closely related to our research are \cite{bernasconi2024noregretenoughbanditsgeneral,castiglioni2024online}, which generalize the BwK framework to more general constraints. Also in their case, the analysis is built upon the weak adaptivity of the primal and dual regret minimizers.

\section{Preliminaries}\label{sec:preliminaries}

We study \emph{online resource allocation} (ORA) problems in which a decision maker (hereafter, agent) selects actions over a finite horizon of \( T \) rounds. We denote the action space of the agent by \( \mathcal{X} \subseteq \mathbb{R}^K \). At each round $t \in [T]$,\footnote{Throughout this paper, we denote by $[a] := \{1,\ldots,a \}$ the set of the first $a \in \mathbb{N}$ natural numbers.} the agent observes an input tuple $\gamma_t = (f_t, \bm{g}_t, \bm{h}_t)$, where:
\begin{itemize}[leftmargin=0.7cm]
	\item $f_t: \mathcal{X} \to [0,1]$ is the reward function;
	\item $\bm{g}_t : \mathcal{X} \to [-1,1]^m$, whose $i$-th component $g_{t,i} : \mathcal{X} \to [-1,1]$ encodes the cost function associated with constraint $i \in [m]$; 
	\item $\bm{h}_{t} : \mathcal{X} \to [0,1]^n$, whose $j$-th component $h_{t,j} : \mathcal{X} \to [0,1]$ encodes the consumption function associated with resource $j \in [n]$.
\end{itemize}
The input functions are fully revealed to the agent before they make any decision. After observing the input, the agent selects an action $x_t \in \mathcal{X}$, which consumes $h_{t,j}(x_t)$ of each resource $j \in [n]$, produces a reward $f_t(x_t)$ and incurs a constraint cost $g_{t,i}(x_t)$ for each $i \in [m]$, where strictly positive values of $g_{t,i}(x_t)$ indicate a violation of constraint $i$, while $g_{t,i}(x_t)\leq 0$ means that constraint $i$ is satisfied.
The input tuples may be either \emph{stochastic}, \emph{i.e.}, generated i.i.d.\ across rounds according to a fixed probability distribution, or \emph{adversarial}, \emph{i.e.}, chosen arbitrarily by an adversary. Throughout the paper, we denote by $\Gamma$ the set of all the possible input tuples $\gamma = (f,\bm{g}, \bm{h})$.

The agent has an overall budget $B_j=\Omega(T)$ limiting the cumulative consumption of each resource $j \in [n]$ over the time horizon. We denote by $\bm{\beta} \in \mathbb{R}^n$ the vector representing the per-iteration budget for each resource, where $\beta_j > 0$ is defined as $B_j/T$. 
As it is common in the online resource allocation literature (see, \emph{e.g.},~\citep{balseiro2023best}), we assume that there exists a \textit{void action} $\varnothing \in \mathcal{X}$ such that, for any input $(f, \bm{g}, \bm{h}) \in \Gamma$, it holds $f(\varnothing) = 0$, $g_i(\varnothing) = 0$ for $i \in [m]$ and $h_j(\varnothing) = 0 $ for $j \in [n]$.

Given a specific sequence of inputs $\bm{\gamma} = (\gamma_1, ..., \gamma_T),$ where $\gamma_t = (f_t, \bm{g}_t, \bm{h}_t)$ for all $t$, we define the \emph{cumulative reward} of the agent over the time horizon $T$ as: 
\[\texttt{Rew}(T, \bm{\gamma}):= \sum_{t \in [T]} f_t(x_t).\]
Moreover, we define the \emph{cumulative constraint violation} over the time horizon $T$ for the general, i.e., non-budget, constraints as: 
\[V_T  := \max_{i \in [m]} \sum_{t \in [T]}g_{t,i}(x_t).\]
The goal of the agent is to maximize $\texttt{Rew}(T, \bm{\gamma})$, while ensuring: 
\begin{itemize}
	\item $V_T$ grows sublinearly in $T$, namely, it is $o(T)$; 
	\item $\sum_{t \in [T]}h_{t,j}(x_t) \leq \beta_j T$ for all $j \in [n]$, \emph{i.e.}, the budget constraints are strictly enforced.
\end{itemize}

\subsection{Stochastic Setting} \label{p_s}

In the stochastic setting, the inputs are drawn i.i.d.\ across rounds from a fixed but unknown probability distribution $\mathcal{P}$ supported on $\Gamma$. Formally, at each round $t \in [T]$, it holds $\gamma_t = (f_t, \bm{g}_t, \bm{h}_t) \sim \mathcal{P}$. We denote an input sequence by $\bm{\gamma} = (\gamma_1, ..., \gamma_T) \in \Gamma^T$ and compactly write $\bm{\gamma} \sim \mathcal{P}^T$ to indicate that each $\gamma_t$ is drawn i.i.d.\ from $\mathcal{P}$. Then, we let $\text{OPT}_\text{stoc} \coloneqq \E_{\bm{\gamma} \sim \mathcal{P}^T} [\text{OPT}(\bm{\gamma})]$ be the dynamic-optimum in the stochastic setting, where $\text{OPT}(\bm{\gamma})$ is defined as:
\begin{equation}
	\text{OPT}(\bm{\gamma}):=\begin{cases}
		\max_{(x_1,...,x_T) \in \mathcal{X}^T}\footnotemark &  \sum_{t \in [T]} f_t(x_t)\\
		\,\,\, \textnormal{s.t.} & \sum_{t \in [T]}  g_{t,i}(x_t) \leq 0 \quad \forall i \in [m] \nonumber\\
		\,\,\,  & \sum_{t \in [T]}  h_{t,j}(x_t) \leq \beta_j T \quad \forall j \in [n].
	\end{cases}
\end{equation}
\footnotetext{All the results presented in this work can be easily extended to baselines defined using
	$
	\sup_{(x_1,\ldots,x_T) \in \mathcal{X}^T}.
	$}
We define a key quantity, known as the \emph{Slater's parameter}, related to the feasibility of the problem: 
\begin{align*}
	\rho_\text{stoc} 
	= - \min_{\pi : \Gamma \to \mathcal{X}} \max \left\{ \max_{i \in [m]} \mathbb{E}_{\gamma \sim \mathcal{P}} \left[ g_{i}(\pi(\gamma)) \right], \max_{j \in [n]} \mathbb{E}_{\gamma \sim \mathcal{P}} \left[h_{j}(\pi(\gamma)) - \beta_j \right]\right\} , 
\end{align*}
where the minimum is taken over all the possible functions mapping inputs to actions.
The Slater's parameter quantifies the strict feasibility margin of the problem. Intuitively, it represents how well constraints are satisfied by the ``most feasible'' solution.

We remark that the proposed definition of $\rho_{\text{stoc}}$ naturally reduces to the standard strict feasibility margin when applied to classical budget-constrained problems (\emph{e.g.}, \cite{balseiro2023best}). In such settings, evaluating our definition under the constant policy $\pi(\gamma) = \varnothing $ for all $\gamma \in \Gamma$, the expectation over the selected action trivially yields zero consumption. Consequently, for each resource $j \in [n]$, the expectation over the input distribution evaluates deterministically to $\mathbb{E}_{\gamma \sim \mathcal{P}}[0 - \beta_j] = -\beta_j$, which yields $\rho = \min_{j \in [n]} \beta_j$
Thus, the expectation over $\gamma$ in our formulation provides the generalization required for settings without a deterministically safe action.

In the stochastic setting, we measure the performance of the agent over the time horizon $T$ by means of the \emph{expected cumulative regret} (or simply \emph{regret} for short), which is defined as follows: 
\[R_T := \text{OPT}_\text{stoc} - \mathbb{E}_{\bm{\gamma} \sim \mathcal{P}^T}[\texttt{Rew}(T, \bm{\gamma})].\] 

\subsection{Adversarial Setting} \label{p_a}

In the adversarial setting, the dynamic-optimum is defined as $\text{OPT}_\text{adv} = \text{OPT}(\bm{\gamma})$ for a given fixed input sequence $\bm{\gamma} = (\gamma_1, \ldots, \gamma_T) \in \Gamma^T$.
Moreover, in this setting, the \emph{Slater's parameter} is:
\[\rho_{\text{adv}} = - \min_{(x_1,...,x_T) \in \mathcal{X}^T} \max_{t \in [T]} \max \left\{ \max_{i \in [m]} g_{t,i}(x_t), \max_{j \in [n]} h_{t,j}(x_t) - \beta_j \right\}.\] 
As shown by~\citet{balseiro2023best}, it is impossible to obtain both sublinear regret and sublinear cumulative constraint violation when constraints are selected adversarially. Consequently, we cannot rely on the standard notion of regret for the adversarial setting. To overcome this limitation, we study the cumulative regret with respect to a fraction $\alpha \in (0,1)$ of the optimum, called $\alpha\text{-}$regret: 
\[
\alpha\text{-}R_T := \alpha \cdot\text{OPT}_\text{adv} - \texttt{Rew}(T),
\]  
where $\alpha := \nicefrac{\rho_\text{adv}}{1 + \rho_\text{adv}}$ and we let $\texttt{Rew}(T)\coloneqq \texttt{Rew}(T,\bm{\gamma})$ since the dependence of (the deterministic) $\gamma$ is clear from the context.

We remark that, both in the stochastic setting introduced in Section~\ref{p_s} and the adversarial one discussed in Section~\ref{p_a}, the agent has no prior knowledge of the Slater's parameter.

\subsection{Lagrangian Formulation} 
For the sake of exposition, it is convenient to group the general long-term constraints and the budget constraints into a single unified vector. Let $M \coloneqq m + n$ be the total number of constraints. For each round $t \in [T]$ and action $x \in \mathcal{X}$, we define the unified constraint vector $\tilde{\bm{g}}_t(x) \in \mathbb{R}^M$ as follows: 
\[
\tilde{g}_{t,i}(x) =
\begin{cases}
	g_{t,i}(x) & \text{if } i \in [m] \\
	h_{t, i-m}(x) - \beta_{i-m}  & \text{if } m < i \leq M.
\end{cases}
\]
A fundamental tool for addressing constrained optimization problems is the Lagrangian formulation. Given the instantaneous reward function $f_t$, and the unified constraint function $\tilde{\bm{g}}_t$, at round $t \in [T]$, we define the instantaneous Lagrangian function as: 
\[\mathcal{L}_{f_t,\bm{\tilde{g}}_t}(x, \bm{\lambda}) := f_t(x) - \sum_{i=1}^{M} \lambda_i \tilde{g}_{t,i}(x) = f_t(x) - 
\langle \bm{\lambda}, \bm{\tilde{g}}_t(x) \rangle,\] 
where $\bm{\lambda} \in \mathbb{R}^M_{\geq 0}$ is the vector of dual multipliers.

\section{Algorithm} \label{sec:alg} 
In this section, we describe our algorithm, whose pseudo-code is provided in Algorithm~\ref{alg:alg1}. 

\begin{algorithm}[!htp]
	\caption{\textsc{Dual Gradient Descent}}
	\label{alg:alg1}
	\begin{algorithmic}[1]
		\Require $T$, $\eta$, $B_j$ for all $j \in [n]$. \label{algLine:0}
		\State $\bm{\lambda}_1\gets \bm{0}$
		\For{$t = 1, 2, \ldots , T$}
		\State Observe input tuple $\gamma_t = (f_t, \bm{g}_t, \bm{h}_t)$ \label{algLine:2}
		\State $\widehat{x}_t \gets
		\argmax_{x \in \mathcal{X}}\mathcal{L}_{f,\bm{\tilde{g}}}(x, \bm{\lambda}_t)$ \Comment{Compute candidate action}\label{algLine:3}
		\If{$\sum_{s = 1}^{t-1} h_{s, j}(x_s) \leq \beta_j T-1$ for every $j \in[n]$}\label{algLine:4}
		\State $x_t \gets
		\widehat{x}_t$ \label{algLine:5}
		\Else 
		\State $x_t \gets \varnothing$ \label{algLine:7}
		\EndIf
		\State $\bm{\lambda}_{t+1} \gets \Pi_{\mathbb{R}^M_{\geq 0}}(\bm{\lambda}_t + \eta\bm{\tilde{g}}_t(x_t))$ \Comment{Dual update}\label{algLine:10}
		\EndFor
	\end{algorithmic}
\end{algorithm}

Algorithm~\ref{alg:alg1} operates as follows. It receives as input the time horizon $T$, the learning rate for the dual update $\eta > 0$, and the budget $B_j$ for each resource $j \in [n]$.
At the beginning of each round $t \in [T]$, the input $\gamma_t = (f_t, \bm{g}_t, \bm{h}_t)$ is fully revealed (Line~\ref{algLine:2}). Then, the algorithm computes the candidate action $\widehat{x}_t$ by greedily maximizing the instantaneous Lagrangian function (Line~\ref{algLine:3}). Since budget constraints must be strictly satisfied, the algorithm checks, for each resource $j \in [n]$, whether the cumulative resource consumption up to the previous round exceeds the budget (Line~\ref{algLine:4}). If the budget constraints are satisfied for all resources, the algorithm plays the candidate action $\widehat{x}_t$ (Line~\ref{algLine:5}). Once the budget is exceeded for at least one resource, the algorithm is forced to select the void action $\varnothing$ (Line~\ref{algLine:7}).\footnote{Technically, we stop slightly before in order to guarantee that the budget constraints are exactly satisfied.} Finally, after the primal action is played, the algorithm performs an \emph{online gradient descent} (OGD) step to update and project the dual variable $\bm{\lambda}$ onto the non-negative orthant $\mathbb{R}^M_{\geq 0}$ for the next round (Line~\ref{algLine:10}). 

In the following, it will be useful to define the stopping time $\tau \leq T$, i.e., the last round in which $\sum_{t \in [\tau]} h_{t,j}(x_t)  \geq \beta_j T-1$ for each resource $j\in [n]$.  For any $t > \tau$, the agent play the void action $x_t = \varnothing$.

\section{Bounding the Lagrangian Multipliers} \label{sec:dual}

In this section, we present the primary technical contribution of our work. Specifically, we show that the Lagrange multipliers remain bounded during the learning dynamic. This analysis requires novel techniques with respect to the ones employed in budget-constrained problems, as general constraints are much more prone to conflict with one another, pushing the primal decision in opposing directions. For instance, while playing the void action is the trivial way to satisfy all budget constraints simultaneously, this property does not hold for general functions. We provide a detailed discussion of this distinction in \Cref{sec:ex} and establish the upper bounds for the multipliers in \Cref{dual_bound}.

\subsection{Difference Between Budget and General Constraints}\label{sec:ex}

Classical ORA problems, such as those studied in \citep{balseiro2023best}, focus exclusively on budget constraints. In such settings, bounding the dual variables follows directly from the definition of void action: playing $\varnothing$ yields zero resource consumption ($h_{t,j}(\varnothing) = 0$), guaranteeing a strictly positive and a priori known feasibility margin (the per-iteration budget $\beta_j$). In absence of other constraints, the overall feasibility margin of the problem simply coincides with the minimum budget parameter ($\rho = \min_j \beta_j > 0$). This margin allows previous works to analytically bound dual multipliers. Our setting, however, incorporates general constraints $\bm{g}_t \in [-1,1]^m$ alongside budget constraints. For these general constraints, the void action yielding $g_{t,i}(\varnothing) = 0$ provides zero slack. Consequently, the overall Slater's parameter no longer coincides with the minimum $\beta_j$, rather, it becomes a problem-dependent unknown quantity. 
To highlight the fundamental technical difference between standard budget-constrained problems (such as \citep{balseiro2023best}) and our setting with general constraints, we present a simplified single-round example with two constraints (\( m = 2 \)).
\begin{example} \label{ex}
	Let us consider an ORA problem instance.\footnote{ We remark that while this reasoning holds for the adversarial setting, it can be trivially extended to the stochastic setting.}
	Suppose that the dual multipliers at the beginning of some round $t \in [T]$ are large and equal $\bm{\lambda}_t = [\lambda_{t,1} \ \lambda_{t,2}] \simeq [ \nicefrac{2}{\rho} \ \  \nicefrac{2}{\rho}]$. We analyze how a standard dual-based algorithm that maximizes the Lagrangian behaves in two different settings: the first with only budget constraints, and the second with only general constraints.
	\begin{itemize}[leftmargin=0.5cm]
		\item In the classical setting with only budget constraints, the constraint function is $h_{t,i}(x) - \rho$, namely, it takes values in $[-\rho, 1 - \rho]$. This means that the maximum extent to which an action can ``satisfy'' the budget constraint is bounded by $-\rho$. The agent has access to a deterministic void action $\varnothing$, which yields zero rewards and consumes zero budget. Consider any other action in $\mathcal{X}$ that incurs a violation greater than $0$ in any of the constraints. For instance, action $x_A \in \mathcal{X}$ such that $f(x_A) = 1$ and $\bm{h}_t(x_A) = [\rho+\epsilon \  0]^\top$. We compare the Lagrangian objective obtained by playing the two actions $\varnothing$ and $x_A$ at round $t$: 
		\begin{align*}
			\mathcal{L}_{f_t, \bm{\tilde{g}}_t}&(\varnothing, \bm{\lambda}_t) = f_t(\varnothing) - \langle \bm{\lambda}_t, \bm{h}_t(\varnothing) - \rho \bm{1} \rangle = 0 + \langle \bm{\lambda}_t, \rho \bm{1} \rangle  = \lambda_{t,1} \rho + \lambda_{t,2} \rho = 4 >\\
			&\mathcal{L}_{f_t, \bm{\tilde{g}}_t}(x_A, \bm{\lambda}_t) = f_t(x_A) - \langle \bm{\lambda}_t, \bm{h}_t(x_A) - \rho \bm{1} \rangle = 1 - \lambda_{t,1} (\rho+\epsilon -\rho)  + \lambda_{t,2} \rho = 3 - \frac{2 \epsilon}{\rho}.
		\end{align*}
		Although $x_A$ yields the best per-round reward possible and satisfies one of the constraint as much as possible, the slight violation on the other constraint penalizes the Lagrangian objective with respect to satisfying both the constraints simultaneously. This holds for any action that violates one of the constraints.
		Since no other action could yield a better Lagrangian value, the algorithm is encouraged to play $\varnothing$, and therefore it stops violating.
		
		\item In the setting with general constraints, 
		constraints take values in $[-1, 1]$ and are not aligned. Indeed, extreme cross-constraint compensation in the Lagrangian may happen. The ``safest'' action in this setting is the strictly safe action leading to the computation of $\rho$, i.e. $x^\circ$ such that $\bm{g}_t(x^\circ) = [-\rho \ \  -\rho]^\top$ by definition. 
		Consider action $x_B \in \mathcal{X}$, yielding $f(x_B) = 1$ and $\bm{g}_t(x_B) = [\rho + \epsilon \ \ -1]^\top$. Similarly to $x_A$, $x_B$ provides the best reward possible and satisfies as much as possible one constraints while slightly violating the other.
		
		Comparing the Lagrangian value obtained by playing either of the two actions $x^\circ$ and $x_B$, we obtain the following inequality: 
		\begin{align*}
			\mathcal{L}_{f_t, \bm{\tilde{g}}_t}&(x^\circ, \bm{\lambda}_t) = f_t(x^\circ) - \langle \bm{\lambda}_t, \bm{g}_t(x^\circ) \rangle \leq 1 + \lambda_{t,1} \rho + \lambda_{t,2} \rho = 5
			< \\
			&\mathcal{L}_{f_t, \bm{\tilde{g}}_t}(x_B, \bm{\lambda}_t) = f_t(x_B) - \langle \bm{\lambda}_t, \bm{g}_t(x_B)  \rangle = 1 - \lambda_{t,1} \cdot (\rho + \epsilon)  + \lambda_{t,2} \cdot 1 \overset{(3\rho + \epsilon = \frac{1}{2})}{=} 5 + \frac{1}{\rho}.
		\end{align*}
		When $3\rho + \epsilon < 1$, since the satisfaction of the second constraint is much larger than $\rho$, its contribution to the Lagrangian function 
		is large and strictly positive although the first constraint is violated. The strictly safe action, which only provides margin $\rho$, yields a worse contribution. Therefore, the algorithm will prefer to play $x_B$ (\emph{i.e.}, keep violating) with respect to the strictly safe action, even when $x^\circ$ leads to the maximum possible reward $f_t(x^\circ)=1$.
	\end{itemize}
\end{example}
In Figure~\ref{fig:lambda-comparison-clean}, we provide a graphical representation of these aspects.

\begin{figure}[t]
	\centering
	\begin{tikzpicture}
		\begin{groupplot}[
			group style={group size=2 by 1, horizontal sep=2.4cm},
			width=0.42\textwidth,
			height=0.32\textwidth,
			xmin=0.5, xmax=7.5,
			ymin=0, ymax=26,
			xtick=\empty,
			ytick=\empty,
			xlabel={$t$},
			axis lines=left,
			label style={font=\small},
			title style={font=\small},
			clip=false,
			grid=none,
			]
			
			\nextgroupplot[
			title={Budget constraints},
			ylabel={$\|\bm{\lambda}_t\|_1$},
			ylabel style={
				rotate=0,
				at={(axis description cs:-0.18,0.5)},
				anchor=center
			},
			]
			
			\addplot[
			very thick,
			dashed,
			teal!70!black
			] coordinates {(0.5,20) (7.5,20)};
			
			\node[
			font=\scriptsize,
			text=teal!70!black,
			anchor=east
			] at (axis cs:0.48,20) {$2/\rho$};
			
			\addplot[
			very thick,
			blue!70!black,
			mark=none
			] coordinates {
				(1,2) (2,5) (3,11) (4,13) (5,18) (6,20) (7,16)
			};
			
			\addplot[
			only marks,
			mark=*,
			mark size=2.0pt,
			fill=red!75!black,
			draw=red!75!black
			] coordinates {
				(1,2) (2,5) (3,11) (4,13) (5,18) (6,20)
			};
			
			\addplot[
			only marks,
			mark=*,
			mark size=2.0pt,
			fill=green!60!black,
			draw=green!40!black
			] coordinates {
				(7,16)
			};
			
			\node[
			font=\scriptsize,
			text=teal!70!black,
			align=center
			] at (axis cs:6.0,22.5) {large $\|\bm{\lambda}_t\|_1$};
			
			\node[
			font=\scriptsize,
			text=green!45!black,
			align=center
			] at (axis cs:7.0,13.5) {stop violating};
			
			\nextgroupplot[
			title={General constraints},
			ylabel={}
			]
			
			\addplot[
			very thick,
			dashed,
			purple!70!black
			] coordinates {(0.5,20) (7.5,20)};
			
			\node[
			font=\scriptsize,
			text=purple!70!black,
			anchor=east
			] at (axis cs:0.48,20) {$2/\rho$};
			
			\addplot[
			very thick,
			blue!70!black,
			mark=none
			] coordinates {
				(1,2) (2,5) (3,11) (4,13) (5,18) (6,20) (7,23)
			};
			
			\addplot[
			only marks,
			mark=*,
			mark size=2.0pt,
			fill=red!75!black,
			draw=red!75!black
			] coordinates {
				(1,2) (2,5) (3,11) (4,13) (5,18) (6,20) (7,23)
			};
			
			\node[
			font=\scriptsize,
			text=purple!70!black,
			align=center
			] at (axis cs:5.6,23.3) {large $\|\bm{\lambda}_t\|_1$};
			
			\node[
			font=\scriptsize,
			text=red!75!black,
			align=center
			] at (axis cs:8.0,18.5) {can still violate};
			
		\end{groupplot}
		
		\node[
		font=\scriptsize,
		align=center
		] at ($(group c1r1.south)!0.5!(group c2r1.south)+(0,-0.8)$)
		{
			\textcolor{blue!70!black}{\rule{0.35cm}{1.2pt}} trajectory of $\|\bm{\lambda}_t\|_1$
			\qquad
			\textcolor{red!75!black}{$\bullet$} violating round
			\qquad
			\textcolor{green!60!black}{$\bullet$} non-violating round
		};
		
	\end{tikzpicture}
	\caption{
		Illustration of the behavior of standard dual algorithms that \emph{do not use a weakly adaptive regret minimizers}.
		Left: in the budget-constrained setting, once $\|\bm{\lambda}_t\|_1$ becomes sufficiently large, the algorithm stops violating, and the dual variable starts decreasing.
		Right: under general constraints, violations may persist even when $\|\bm{\lambda}_t\|_1$ is already large.
	}
	\label{fig:lambda-comparison-clean}
\end{figure}
\subsection{A Linear Upper Bound} \label{dual_bound} 
In standard Lagrangian frameworks, the magnitude of dual variables is typically controlled through an explicit projection step onto a bounded domain. However, this operation requires the knowledge of the Slater's parameter $\rho$. To bypass this restrictive assumption, we leverage the property of \emph{weak adaptivity}~\citep{HazanOCO}, which we formally defined as follows.
\begin{definition}[Weakly-adaptive regret minimizer] \label{def} 
	An online learning algorithm is said to be a \emph{weakly-adaptive regret minimizer} if, for any interval $I = [t_1, t_2] \subseteq [T] $, its regret $R_I$ accumulated over the interval is sublinear in $T$. Formally: 
	\[
	\max_{I=[t_1,t_2] \subseteq [T]} R_I(\mathcal{X}) = \max_{I= [t_1,t_2] \subseteq [T]} \left\{ \max_{x \in \mathcal{X}} \sum_{t \in I} f_t(x) - \sum_{t \in I} f_t(x_t) \right\} \leq o(T).
	\]
\end{definition}
Intuitively, a weakly-adaptive regret minimizer guarantees sublinear regret not just over the full time horizon $T$, but over any arbitrary sub-interval $[t_1, t_2] \subseteq [T]$. Online gradient descent, which we employ to update the Lagrangian multipliers, naturally satisfies this property for linear losses. 
In this section, we use weak adaptivity to provide a theoretical bound on the $\ell_1$-norm of the Lagrangian multipliers $\bm{\lambda}_t$, which allows us to subsequently bound the cumulative constraint violation. 
\begin{restatable}{lemma}{selfb} \label{selfb}
	Let $\delta\in(0,1)$. Consider Algorithm~\ref{alg:alg1} with learning rate $\eta = \nicefrac{1}{60 M \sqrt{2T \ln\left( \nicefrac{T^2}{\delta} \right)}}$.
	With probability at least $1-\delta$, the $\ell_1$-norm of the dual variables is bounded for all $t \in [T]$ as:
	\[ \|\bm{\lambda}_t\|_1 \leq \frac{14M}{\rho}.\] 
\end{restatable}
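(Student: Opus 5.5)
We argue by contradiction, following the weak-adaptivity template of \citet{castiglioni2024online}. Suppose that for some round $t_2$ we have $\|\bm{\lambda}_{t_2}\|_1 > 14M/\rho$. Since $\bm{\lambda}_1=\bm 0$ and each dual step moves $\bm{\lambda}$ by at most $\eta\|\tilde{\bm g}_t(x_t)\|_1 \le 2M\eta$ in $\ell_1$-norm, there is a last round $t_1<t_2$ with $\|\bm{\lambda}_{t_1}\|_1 \le 2M/\rho$ (say). On the interval $I=[t_1,t_2]$ the multipliers stay above $2M/\rho$, and the $\ell_1$-norm grows by more than $12M/\rho$. Using $\|\bm{\lambda}_{t+1}\|_2^2 \ge \|\bm{\lambda}_t\|_2^2 + \ldots$, i.e.\ the standard OGD potential argument, large growth of $\bm{\lambda}$ forces $\sum_{t\in I}\langle \bm{\lambda}_t,\tilde{\bm g}_t(x_t)\rangle$ to be large. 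More concretely, we would use the OGD interval regret bound: for any fixed $\bm{\lambda}\ge 0$,
\[
\sum_{t\in I}\langle \bm{\lambda}-\bm{\lambda}_t,\tilde{\bm g}_t(x_t)\rangle \le \frac{\|\bm{\lambda}-\bm{\lambda}_{t_1}\|_2^2}{2\eta}+\frac{\eta}{2}\sum_{t\in I}\|\tilde{\bm g}_t(x_t)\|_2^2 .
\]
Taking $\bm{\lambda}=\bm 0$ gives a lower bound on $\sum_{t\in I}\langle\bm{\lambda}_t,\tilde{\bm g}_t(x_t)\rangle \ge -\|\bm{\lambda}_{t_1}\|^2/(2\eta) - 2M\eta|I|$. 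Taking $\bm{\lambda}$ along the direction of $\bm{\lambda}_{t_2}$ relates the accumulated violation to the growth of the multipliers.

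On the primal side, for rounds $t\le\tau$ the action $x_t=\widehat{x}_t$ maximizes the Lagrangian. Hence, for any comparator $x^\circ_t$ (the Slater action in the adversarial case, $\pi^\circ(\gamma_t)$ in the stochastic case),
\[
f_t(x_t)-\langle\bm{\lambda}_t,\tilde{\bm g}_t(x_t)\rangle \ge f_t(x^\circ_t)-\langle\bm{\lambda}_t,\tilde{\bm g}_t(x^\circ_t)\rangle \ge -\langle\bm{\lambda}_t,\tilde{\bm g}_t(x^\circ_t)\rangle,
\]
and therefore $\langle\bm{\lambda}_t,\tilde{\bm g}_t(x_t)\rangle \le 1+\langle\bm{\lambda}_t,\tilde{\bm g}_t(x^\circ_t)\rangle$. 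In the adversarial case the right-hand side is at most $1-\rho\|\bm{\lambda}_t\|_1$ pointwise. In the stochastic case it holds only in expectation, because $\bm{\lambda}_t$ is $\mathcal F_{t-1}$-measurable. We would apply Azuma--Hoeffding to the martingale differences $\langle\bm{\lambda}_t,\tilde{\bm g}_t(x_t^\circ)-\mathbb E[\cdot]\rangle$, scaled by $\|\bm{\lambda}_t\|_1$, and union bound over all $T^2$ intervals. This is where the $\ln(T^2/\delta)$ term in $\eta$ comes from. For rounds $t>\tau$ the action is $\varnothing$, so $\langle\bm{\lambda}_t,\tilde{\bm g}_t(\varnothing)\rangle = -\sum_j\lambda_{t,m+j}\beta_j\le 0$. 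This is benign for the budget coordinates, and the general-constraint multipliers are frozen on those rounds, so they cannot grow.

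Summing over $I$ gives, on the high-probability event,
\[
\sum_{t\in I}\langle\bm{\lambda}_t,\tilde{\bm g}_t(x_t)\rangle \le |I| - \rho\sum_{t\in I}\|\bm{\lambda}_t\|_1 + \tilde O\Bigl(\tfrac{M}{\rho}\sqrt{|I|}\Bigr) \le -|I|\,M + \tilde O\Bigl(\tfrac{M}{\rho}\sqrt{T}\Bigr),
\]
using $\|\bm{\lambda}_t\|_1\ge 2M/\rho$ on $I$. Combining this with the OGD lower bound gives
\[
-\frac{(2M/\rho)^2}{2\eta}-2M\eta|I| \le -M|I|+\tilde O\Bigl(\tfrac{M}{\rho}\sqrt T\Bigr).
\]
This constrains $|I|$ to be $O(M/(\rho^2\eta))$. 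We then need to check that within such an interval the norm cannot rise by $12M/\rho$, since the increase is at most $2M\eta|I|$.

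The main obstacle is that this direct bound is not tight enough. It is exactly the cross-constraint compensation of Example~\ref{ex}: the lower bound from the $\bm{\lambda}=\bm 0$ comparator does not use the growth of $\bm{\lambda}$. The growth must instead be converted into a lower bound on $\sum_{t\in I}\langle\bm{\lambda}_t,\tilde{\bm g}_t(x_t)\rangle$ through the identity
\[
\|\bm{\lambda}_{t+1}\|_2^2 \le \|\bm{\lambda}_t\|_2^2+2\eta\langle\bm{\lambda}_t,\tilde{\bm g}_t(x_t)\rangle+\eta^2\|\tilde{\bm g}_t(x_t)\|_2^2 .
\]
Summing it over $I$ gives $2\eta\sum_I\langle\bm{\lambda}_t,\tilde{\bm g}_t\rangle \ge \|\bm{\lambda}_{t_2}\|_2^2-\|\bm{\lambda}_{t_1}\|_2^2 - 4M\eta^2|I|$. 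Comparing with the primal upper bound then yields a contradiction, provided $|I|$ is large relative to $\sqrt{T\ln(T^2/\delta)}$ and $\eta\approx 1/(M\sqrt{T\ln})$. The constant $60M$ in $\eta$ and the factor $14$ would be tuned at this stage. We also expect to use $\|\cdot\|_2\ge\|\cdot\|_1/\sqrt M$ and to handle separately the case where $|I|$ is short, in which the norm cannot increase by $12M/\rho$ because the step size is small.
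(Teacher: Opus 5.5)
Your route is valid and differs from the paper's in the key step: how growth of the multipliers on the interval becomes a lower bound on $\sum_{t}\langle\bm{\lambda}_t,\tilde{\bm g}_t(x_t)\rangle$.

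\textbf{The paper's route.} Lemma~\ref{b4} works coordinate-wise. For each $i$ it splits $[t_1,t_2]$ at the last round $\bar t_i$ where $\lambda_{t,i}=0$. No projection occurs after $\bar t_i$, so $\sum_{t\ge\bar t_i}\tilde g_{t,i}(x_t)=(\lambda_{t_2,i}-\lambda_{\bar t_i,i})/\eta$. It then applies the interval bound of Lemma~\ref{ogd} with the nonzero comparator $\bm{\mu}=\bm{1}/\rho$ on the tail and $\bm{\mu}=\bm 0$ on the head. This gives a bound linear in $\|\bm{\lambda}_{t_2}\|_1-\|\bm{\lambda}_{t_1}\|_1$, which after paying $\|\bm{\mu}\|_2^2/(2\eta)=M/(2\rho^2\eta)$ is still of order $M/(\rho^2\eta)$. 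That bound is then combined with a lower bound on $t_2-t_1$.

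\textbf{Your route.} Your potential inequality is exactly Lemma~\ref{ogd} with $\bm{\mu}=\bm 0$, but keeping the term $-\|\bm{\lambda}_{t_2}\|_2^2/(2\eta)$ that the paper's proof of that lemma discards. That dropped term is precisely why your first attempt ``did not use the growth''. Combined with $\|\bm{\lambda}_{t_2}\|_2^2\ge\|\bm{\lambda}_{t_2}\|_1^2/M\ge 196M/\rho^2$, it gives a bound of the same order in one line, with no last-zero bookkeeping and no lower bound on the interval length. Your caveats about $|I|$ being large, and the separate short-interval case, are therefore unnecessary: the loss $\eta M|I|/2$ is absorbed by the $-|I|$ from the primal side. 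In your route the factor $M$ in $14M/\rho$ comes from the $\ell_1$-to-$\ell_2$ conversion; in the paper's it comes from $\|\bm{\mu}\|_2^2$.

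\textbf{One adjustment is needed.} The lower threshold must be $O(1/\rho)$, not $2M/\rho$. With $2M/\rho$ you only know $\|\bm{\lambda}_{t_1}\|_2^2\le 4M^2/\rho^2$, which exceeds $196M/\rho^2$ once $M\ge 50$. Then $\|\bm{\lambda}_{t_2}\|_2^2-\|\bm{\lambda}_{t_1}\|_2^2$ can be negative (a concentrated $\bm{\lambda}_{t_1}$ and a spread-out $\bm{\lambda}_{t_2}$), and the contradiction is lost.

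Use the threshold $2/\rho$, as the paper does. The primal side still gives $1-\rho\|\bm{\lambda}_t\|_1\le-1$ on the interval, and summing over $[t_1,t_2-1]$ yields
\[
\frac{196M-4}{2\rho^2\eta}-\frac{\eta M(t_2-t_1)}{2}\le-(t_2-t_1)+2+2\Bigl(\frac{14M}{\rho}+M\eta\Bigr)\sqrt{2T\ln(T^2/\delta)}.
\]
This is impossible for $\eta=1/(60M\sqrt{2T\ln(T^2/\delta)})$, since the first term equals $30M(196M-4)\sqrt{2T\ln(T^2/\delta)}/\rho^2$.

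Also take $t_2$ to be the first exceedance. Then $\|\bm{\lambda}_t\|_1\le 14M/\rho+M\eta$ on the interval, so Azuma--Hoeffding is applied with a deterministic range.
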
 
To prove Lemma~\ref{selfb}, we proceed by contradiction. Supposing that the $\ell_1$-norm of the dual variables $\| \bm{\lambda}_t \|_1$ exceeds a threshold of $\mathcal{O}(M/\rho)$ at some round $t_2$, we analyze the resulting primal-dual dynamics. Intuitively, the dual variables act as penalty weights for constraint violations. When these multipliers grow disproportionately large, the penalty term $\langle \bm{\lambda}_t, \bm{\tilde{g}}_t(x) \rangle$ in the instantaneous Lagrangian objective completely dominates the reward, which is bounded in $[0,1]$. Consequently, to maximize the Lagrangian, the greedy primal step is forced to select an action that yields a penalty at least as favorable as the strictly feasible baseline. Since the baseline guarantees a strictly negative value, the primal response systematically pulls the multipliers back down.
We formalize this intuition by comparing the Lagrangian value of the selected action with that of the strictly feasible baseline, obtaining a lower bound on the cumulative Lagrangian $\sum_{t \in [t_1, t_2]} \mathcal{L}_{f_t, \bm{\tilde{g}}_t}(x_t, \bm{\lambda}_t)$. On the other hand, we derive an upper bound on the same quantity by exploiting the weak adaptivity of online gradient descent. Specifically, rather than evaluating the dual regret globally over the full horizon $T$, we evaluate it locally over the sub-interval $[t_1, t_2]$. 
This step highlights why our framework specifically requires OGD rather than a general online mirror descent (OMD) method. Algorithms guaranteeing only \emph{global} no-regret over $[T]$ (such as standard OMD with entropy regularizer), can accumulate negative regret during early rounds and exploit it to ``absorb'' later large penalties. Our weak adaptivity property prevents the algorithm from relying on past negative regret. This represents a fundamental difference with respect to \citep{balseiro2023best}. 

We show that for the dual variables to reach the assumed threshold, they would need to be updated at a rate strictly incompatible with our hypothesis. This means that the greedy primal is guaranteed to react and halt constraint violation long before the multiplier can ever reach that threshold. The extended proof can be found in Appendix~\ref{app:dual}.

We note that for $t > \tau$, the algorithm can only play the void action $\varnothing$, for which it holds $\tilde{g}_{t,i}(\varnothing) \leq 0$ for all $i \in [M]$. This means that the dual update can only decrease the multipliers (or keep them unchanged) and thus the analysis can focus on the interval $[1,\tau]$.

\section{Theoretical Results} \label{sec:theo} 
In this section, we prove the theoretical guarantees attained by Algorithm~\ref{alg:alg1}. In particular, given the previous bound on the Lagrangian variables, we analyze the constraint violation, the cumulative regret in the stochastic setting, and the cumulative $\alpha\text{-}$regret in the adversarial setting.

\subsection{Constraint Violation} \label{sec:violation}  
Using Lemma~\ref{selfb}, we can now establish the worst-case bound on the constraint violation $V_T$.

\begin{restatable}{theorem}{vt} \label{vt}
	Let $\delta\in(0,1)$. Both in the stochastic and the adversarial setting, Algorithm~\ref{alg:alg1} with learning rate $\eta := \nicefrac{1}{60 M \sqrt{2T \ln\left( \nicefrac{T^2}{\delta} \right)}}$, attains with probability at least $1 - \delta$: 
	\[ V_T \leq 840 \frac{M^2}{\rho} \sqrt{2T \ln\left( \frac{T^2}{\delta} \right)},\] 
	where $\rho \coloneqq \rho_\textnormal{stoc}$ in the stochastic setting and $\rho \coloneqq \rho_\textnormal{adv}$ in the adversarial setting. Moreover, the budget constraints are deterministically satisfied, i.e., \[\sum_{t \in [T]} h_{t,j}(x_t) \leq \beta_j T \quad \forall j \in [n].\]
\end{restatable}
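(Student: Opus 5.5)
The plan is to prove the two claims separately: the budget claim deterministically, and the violation bound from the projected dual update together with Lemma~\ref{selfb}.

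\textbf{Budget constraints.} Fix a resource $j \in [n]$. The algorithm plays a non-void action at round $t$ only if the check in Line~\ref{algLine:4} passes, i.e., $\sum_{s=1}^{t-1} h_{s,j}(x_s) \leq \beta_j T - 1$. Since $h_{t,j}(x_t) \leq 1$, consumption after any such round is at most $\beta_j T$. When the check fails, the void action is played, which consumes nothing. Hence the cumulative consumption never exceeds $\beta_j T$. This holds for every input sequence, so it holds in both settings.

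\textbf{Violation via the dual update.} The projection onto the nonnegative orthant acts coordinatewise, so $\lambda_{t+1,i} = \max\{0, \lambda_{t,i} + \eta \tilde g_{t,i}(x_t)\} \geq \lambda_{t,i} + \eta \tilde g_{t,i}(x_t)$ for each $i \in [m]$. Telescoping from $\lambda_{1,i} = 0$ gives
\[
\sum_{t \in [T]} g_{t,i}(x_t) \leq \frac{\lambda_{T+1,i}}{\eta} \leq \frac{\|\bm{\lambda}_{T+1}\|_1}{\eta}.
\]
Lemma~\ref{selfb} controls $\|\bm{\lambda}_t\|_1 \leq 14M/\rho$ for $t \in [T]$, with probability at least $1-\delta$. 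For round $T+1$, one extra step changes $\|\bm{\lambda}\|_1$ by at most $\eta M$, so $\|\bm{\lambda}_{T+1}\|_1 \leq 14M/\rho + \eta M$. Alternatively, one checks that the contradiction argument in the lemma also covers index $T+1$. Plugging in $\eta^{-1} = 60M\sqrt{2T\ln(T^2/\delta)}$ gives
\[
V_T \leq \frac{14M}{\rho}\cdot 60 M \sqrt{2T\ln(T^2/\delta)} + M = 840\,\frac{M^2}{\rho}\sqrt{2T\ln(T^2/\delta)} + M.
\]

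\textbf{Main obstacle.} The only delicate point is the index $T+1$. The lemma is stated only for $t \in [T]$, and the additive $\eta M$ term adds an extra $+M$ that does not appear in the stated constant. There are two ways to remove it. The first is to note that the lemma's proof actually bounds $\|\bm{\lambda}_t\|_1$ for every $t \le T+1$, since the contradiction argument does not use $t \le T$. The second is to bound $\sum_{t<T} g_{t,i}(x_t) \le \lambda_{T,i}/\eta$ and add the single last-round term $g_{T,i}(x_T) \le 1$, then absorb it into the constant. The absorption works because $14M/\rho \geq 14$ when $\rho \le 1$, so the main term already dominates $1$. Everything else is mechanical.
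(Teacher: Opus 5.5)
Your proposal is correct and follows the paper's proof: the budgets hold by the Line~\ref{algLine:4} check together with $h_{t,j}(\varnothing)=0$, and the violation bound comes from telescoping $\lambda_{t+1,i}\ge\lambda_{t,i}+\eta g_{t,i}(x_t)$ and applying Lemma~\ref{selfb}; the paper telescopes only over $[\tau]$ (using $g_{t,i}(\varnothing)=0$ afterwards) and applies the lemma to $\bm{\lambda}_{\tau+1}$ without comment, so the index-$(T+1)$ issue you flag for the case $\tau=T$ is genuinely glossed over there. Of your two remedies, only the first (extending the lemma's contradiction argument to index $T+1$) gives the constant $840$ exactly, because the second yields $V_T\le 840\frac{M^2}{\rho}\sqrt{2T\ln(T^2/\delta)}+1$, and the main term dominating $1$ enlarges the constant rather than removing the $+1$.
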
 

Theorem~\ref{vt} states that our algorithm guarantees, in both the stochastic and the adversarial setting, a constraint violation of the order of $\widetilde{\mathcal{O}}(\sqrt{T})$ without knowing the Slater's parameter. Intuitively,
by construction of the algorithm, budget constraints are never violated, since once the cumulative consumption of any resource $j$ reaches $\beta_j T - 1$, the algorithm starts selecting the void action $\varnothing$. Hence, it is sufficient to bound the violation of the general constraints $g_{t,i}$ for all $i \in [m]$. The bound is obtained by applying Lemma~\ref{selfb} and using the OGD update definition $\lambda_{t+1, i} = \max\{0, \lambda_{t,i} + \eta g_{t,i}(x_t) \}$. Telescoping the update rule over all rounds $t$, we obtain that the cumulative violation can be upper-bounded by a factor proportional to the final value of the dual multiplier. Intuitively, every violation increases the corresponding multiplier, and therefore, large cumulative violations imply large dual variables. Since the void action satisfies $g_{t,i}(\varnothing) = 0$ for all $t > \tau$, the bound is studied on the violation accumulated up to $\tau$. The full proof can be found in Appendix~\ref{app:v}.

\subsection{Stochastic Setting} \label{sec:stoch}   
In this section, we evaluate the performance of Algorithm~\ref{alg:alg1} in the stochastic regime, where inputs are drawn from a fixed but unknown distribution.

\begin{restatable}{theorem}{stre} \label{st_re}
	Let $\delta\in(0,1)$. Consider Algorithm~\ref{alg:alg1} with learning rate $\eta = \nicefrac{1}{60 M \sqrt{2T \ln\left( \nicefrac{T^2}{\delta} \right)}}$. With probability at least $1 - \delta$, in the stochastic setting it holds: 
	\[    R_T \leq \frac{1}{\beta_{\text{min}}} + \frac{60 M \sqrt{2 T \ln\left( \frac{T^2}{\delta} \right)}}{2 \beta_{\text{min}}^2} + \frac{\sqrt{T}}{120 \sqrt{2 \ln\left( \frac{T^2}{\delta} \right)}}.\]  
\end{restatable}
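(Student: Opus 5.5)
We follow the dual-based template of \citet{balseiro2023best}, adapted to the unified constraint vector $\tilde{\bm g}_t$. The first step is an upper bound on $\text{OPT}_\text{stoc}$ via Lagrangian duality. For any fixed $\bm\lambda\ge 0$, define the dual function $D(\bm\lambda):=\E_{\gamma\sim\mathcal P}[\max_{x}\mathcal L_{f,\tilde{\bm g}}(x,\bm\lambda)]$. For every feasible sequence, the penalty term $-\langle\bm\lambda,\sum_t\tilde{\bm g}_t(x_t)\rangle$ is nonnegative, so weak duality gives $\text{OPT}(\bm\gamma)\le\sum_t\max_x\mathcal L_{f_t,\tilde{\bm g}_t}(x,\bm\lambda)$. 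Taking expectations yields $\text{OPT}_\text{stoc}\le T\,D(\bm\lambda)$ for every $\bm\lambda\ge0$, and the same holds for any $\bm\lambda$ chosen after the fact.

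The second step lower bounds the reward up to the stopping time $\tau$. Because $\bm\lambda_t$ is measurable with respect to the history up to $t-1$ and $\gamma_t$ is independent of it, greediness gives
\[
\E\bigl[f_t(x_t)-\langle\bm\lambda_t,\tilde{\bm g}_t(x_t)\rangle\mid\mathcal F_{t-1}\bigr]=D(\bm\lambda_t)
\]
for every $t\le\tau$. By optional stopping, $\E[\sum_{t\le\tau}f_t(x_t)]=\E[\sum_{t\le\tau}D(\bm\lambda_t)]+\E[\sum_{t\le\tau}\langle\bm\lambda_t,\tilde{\bm g}_t(x_t)\rangle]$. The first sum is at least $\E[\tau]\,\text{OPT}_\text{stoc}/T$. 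The complementary-slackness term is handled through the standard OGD regret bound with linear losses $\bm\lambda\mapsto\langle\bm\lambda,\tilde{\bm g}_t(x_t)\rangle$, comparing against a fixed $\bm\lambda^\star$:
\[
\sum_{t\le\tau}\langle\bm\lambda_t-\bm\lambda^\star,\tilde{\bm g}_t(x_t)\rangle\ge-\frac{\|\bm\lambda^\star\|_2^2}{2\eta}-\frac{\eta}{2}\sum_{t\le\tau}\|\tilde{\bm g}_t\|_2^2 .
\]
Because $\bm\lambda_1=0$, this inequality is sign-correct and needs no bound on $\bm\lambda_t$. Hence Lemma~\ref{selfb} is not required, which explains why the bound has no $\rho$ dependence.

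The third step, which is the main obstacle, is choosing $\bm\lambda^\star$ to pay for the lost rounds $T-\tau$. We take $\bm\lambda^\star=\bm e_{m+j}/\beta_j$, supported on the budget coordinate $j$ that was depleted at $\tau$; if $\tau=T$ we take $\bm\lambda^\star=0$. Then $\langle\bm\lambda^\star,\sum_{t\le\tau}\tilde{\bm g}_t\rangle=\frac{1}{\beta_j}(\sum_{t\le\tau}h_{t,j}-\beta_j\tau)\ge\frac{\beta_jT-1}{\beta_j}-\tau=T-\tau-\frac1{\beta_j}$. Since $f\le1$, the rounds after $\tau$ lose at most $(T-\tau)\,\text{OPT}_\text{stoc}/T\le T-\tau$, and this is exactly cancelled, up to the $1/\beta_{\min}$ term. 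The OGD remainder contributes $\|\bm\lambda^\star\|^2/(2\eta)\le\frac{60M\sqrt{2T\ln(T^2/\delta)}}{2\beta_{\min}^2}$ and $\frac{\eta}{2}\sum_t\|\tilde{\bm g}_t\|^2\le\eta MT/2$, which for the given $\eta$ is at most $\sqrt T/(120\sqrt{2\ln(T^2/\delta)})$ once the $\|\cdot\|^2\le M$ bound is used with the $1/(60M)$ factor. Two points need care. First, $\bm\lambda^\star$ depends on the random index $j$; we handle this by applying the regret inequality pathwise, since it holds for every comparator simultaneously, and only then taking expectations. Second, the general constraints $g_{t,i}$ can be positive and so might seem to enter the slackness term with the wrong sign; they do not, because the regret bound is applied to the full vector, with the comparator placing zero weight on the general constraints. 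Summing the three contributions gives the stated bound on $R_T$.
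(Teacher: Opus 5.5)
Your proposal is correct and follows essentially the same route as the paper: weak duality bounds $\text{OPT}_\text{stoc}$ via the expected dual function, the greedy-Lagrangian identity plus optional stopping handles the rounds up to $\tau$, and the OGD regret bound started from $\bm\lambda_1=\bm 0$ with comparator $\bm e_{m+k}/\beta_k$ (the paper's Lemma~\ref{r_d}) cancels the $T-\tau$ lost rounds, without using Lemma~\ref{selfb}. The only difference is cosmetic: you apply $D(\bm\lambda_t)\ge \text{OPT}_\text{stoc}/T$ pointwise for each $t$, whereas the paper passes through the averaged multiplier $\bar{\bm\lambda}_\tau$ and convexity of the dual function, a detour your argument shows is unnecessary.
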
  

Theorem~\ref{st_re} shows that Algorithm~\ref{alg:alg1} achieves the desired $\widetilde{\mathcal{O}}(\sqrt{T})$ regret bound for the stochastic setting. The analysis proceeds in two main steps. First, we establish an upper bound on the baseline $\text{OPT}_{\text{stoc}}$ using weak duality and the definition of expected Lagrangian $\bar{\mathcal{L}}_{f,\bm{\tilde{g}}}(\bm{\lambda} | \mathcal{P}) := \mathbb{E}_{(f, \bm{\tilde{g}}) \sim \mathcal{P}} [\mathcal{L}_{f,\bm{\tilde{g}}}(x_t, \bm{\lambda})]$. Then, we analyze the algorithm's performance up to the random stopping time $\tau$. We lower-bound the instantaneous algorithm reward $f_t(x_t)$ and consider its expected value conditional on the past $\Gamma_{t-1} = \{\gamma_1,..., \gamma_{t-1} \}$. We then relate the cumulative reward to the instantaneous Lagrangian by applying a martingale argument on both the reward and the penalty term. Leveraging the regret guarantees of the OGD update, we bound the penalty term $-\sum_{t \in [\tau]} \langle \bm{\lambda}_t, \bm{\tilde{g}}_t(x_t) \rangle$ by exploiting the fact the, when the budget of some resource $k$ is exhausted, the cumulative consumption provides a lower bound on the dual penalty. The extended analysis can be found in Appendix~\ref{app:st}.

\subsection{Adversarial Setting} \label{sec:adv} 
Finally, in this section, we analyze the performance of the algorithm in the adversarial setting, when inputs may arbitrarily change at each round. 

\begin{restatable}{theorem}{advreg} \label{adv_reg}
	Let $\delta\in(0,1)$. Consider Algorithm~\ref{alg:alg1} with learning rate $\eta = \nicefrac{1}{60 M \sqrt{2T \ln\left( \nicefrac{T^2}{\delta} \right)}}$. With probability at least $1 - \delta$, in the adversarial setting it holds: 
	\[\alpha\text{-}R_T \leq  \frac{1}{\beta_{\text{min}}} + \frac{60 M \sqrt{2 T \ln\left( \frac{T^2}{\delta} \right)}}{2 \beta_{\text{min}}^2} + \frac{\sqrt{T}}{120 \sqrt{2 \ln\left( \frac{T^2}{\delta} \right)}}.\] 
\end{restatable}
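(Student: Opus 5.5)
We follow the template of \citet{balseiro2023best}: compare the algorithm's Lagrangian up to the stopping time $\tau$ against a mixture of the dynamic optimum and the Slater-feasible sequence, and control the penalty terms with the OGD regret bound. We use Lemma~\ref{selfb} only through the high-probability event on which $\|\bm{\lambda}_t\|_1 \leq 14M/\rho$, which keeps the OGD step sizes and gradients bounded.

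\textbf{Step 1 (comparator).} Let $(x_1^*,\dots,x_T^*)$ attain $\text{OPT}_\text{adv}$, and let $(x_1^\circ,\dots,x_T^\circ)$ attain $\rho_\text{adv}$, so that $\tilde g_{t,i}(x_t^\circ)\le -\rho$ for every $t$ and $i$. Since the Lagrangian is maximized over the nonconvex set $\mathcal{X}$, we mix in value rather than in action. Greedy optimality of $\widehat x_t$ gives, for every $t\le\tau$ (where $x_t=\widehat x_t$),
\[
f_t(x_t)-\langle\bm\lambda_t,\tilde{\bm g}_t(x_t)\rangle \ge \alpha\bigl(f_t(x_t^*)-\langle\bm\lambda_t,\tilde{\bm g}_t(x_t^*)\rangle\bigr)+(1-\alpha)\bigl(f_t(x_t^\circ)-\langle\bm\lambda_t,\tilde{\bm g}_t(x_t^\circ)\rangle\bigr).
\]
The weight $\alpha=\rho/(1+\rho)$ is chosen so that $(1-\alpha)\rho=\alpha$. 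Every constraint of the mixture then has per-round value at most $\alpha\tilde g_{t,i}(x^*_t)-\alpha$. The $-\alpha$ slack absorbs the per-round violation of $x^*$, which is bounded by $1$ since $g\in[-1,1]$ and $h-\beta\le 1$.

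\textbf{Step 2 (summation).} Summing over $t\le\tau$ gives
\[
\sum_{t\le\tau} f_t(x_t) \ge \alpha\sum_{t\le\tau} f_t(x_t^*) + \sum_{t\le\tau}\langle\bm\lambda_t,\tilde{\bm g}_t(x_t)\rangle - \sum_{t\le\tau}\bigl\langle\bm\lambda_t,\alpha\tilde{\bm g}_t(x_t^*)+(1-\alpha)\tilde{\bm g}_t(x_t^\circ)\bigr\rangle .
\]
The last sum is handled per coordinate, and it is the delicate term. Because $x^*$ is feasible only cumulatively over $[T]$, not per round, we cannot simply drop it. There are two ways to proceed:
\begin{itemize}
\item Rewrite it as $\langle\bm\lambda_t-\bm 0,\cdot\rangle$ plus a term with the fixed comparator $\bm\lambda=\bm 0$.
\item Compare against the OGD regret with the comparator $\bm\lambda=0$ applied to the sequence $\tilde{\bm g}_t(x_t)$.
\end{itemize}
The OGD bound gives $\sum_{t\le\tau}\langle \bm 0-\bm\lambda_t,\tilde{\bm g}_t(x_t)\rangle \le \frac{\eta}{2}\sum\|\tilde{\bm g}_t\|^2 \le \eta M T$. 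Hence $\sum_{t\le\tau}\langle\bm\lambda_t,\tilde{\bm g}_t(x_t)\rangle\ge -\eta MT$, which is the $\sqrt{T}/(120\sqrt{2\ln})$ term. For the mixture term we bound per round $\langle\bm\lambda_t,\alpha\tilde{\bm g}_t(x_t^*)+(1-\alpha)\tilde{\bm g}_t(x_t^\circ)\rangle \le \alpha\langle \bm\lambda_t,\tilde{\bm g}_t(x^*_t)\rangle - \alpha\|\bm\lambda_t\|_1$, and then aim for it to be nonpositive. If this per-round argument fails, a fallback is to exchange $x^*$ for per-round optimal "fractional" comparators, as in the adversarial analysis of \citet{balseiro2023best}.

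\textbf{Step 3 (stopping).} If $\tau=T$, we are done. Otherwise some resource $k$ satisfies $\sum_{t\le\tau}h_{t,k}(x_t)\ge\beta_kT-1$. Using OGD regret against the comparator $\bm\lambda=\frac{1}{\beta_{\min}}\bm e_{k}$ yields a gain of $\frac{1}{\beta_k}\sum_{t\le\tau}(h_{t,k}-\beta_k)\ge \frac1{\beta_k}(\beta_k(T-\tau)-1)$. This covers the lost reward $\alpha(T-\tau)\le T-\tau$ at a cost of $\frac{1}{\beta_{\min}}+\frac{\|\bm\lambda\|^2}{2\eta}=\frac{1}{\beta_{\min}}+\frac{60M\sqrt{2T\ln}}{2\beta_{\min}^2}$. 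These are exactly the first two terms of the bound.

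\textbf{Main obstacle.} The hard step is making the mixture term in Step 2 nonpositive. This term couples the per-round comparator violations with unbounded-sign $\bm\lambda_t$. The boundedness from Lemma~\ref{selfb} and the $\alpha$-slack must be combined carefully so that no $1/\rho$ factor leaks into the final bound.
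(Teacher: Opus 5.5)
Your architecture is the paper's: compare against the mixture that plays $x_t^*$ with weight $\alpha$ and $x_t^\circ$ with weight $1-\alpha$, use greedy optimality of $x_t$ for $t\le\tau$, charge the rounds after $\tau$ at most $T-\tau$, and bound $-\sum_{t\le\tau}\langle\bm{\lambda}_t,\tilde{\bm{g}}_t(x_t)\rangle$ via OGD with comparator $\bm{0}$ (if $\tau=T$) or a scaled $\bm{e}_{m+k}$ (if resource $k$ is exhausted), which is Lemma~\ref{r_d}.

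\textbf{The gap.} The proof as written is not finished. You leave open the step you call the main obstacle (``aim for it to be nonpositive'', plus a vague fallback), and your reasons for thinking it is delicate are mistaken. First, the multipliers do not have unbounded sign: the projection onto $\mathbb{R}^M_{\geq 0}$ gives $\bm{\lambda}_t\geq\bm{0}$ for every $t$. Second, cumulative feasibility of $x^*$ is never needed. You only need the pointwise bound $\tilde g_{t,i}(x_t^*)\leq 1$, which you already state in Step 1. Together these close the step in one line, for every round $t$:
\[
\bigl\langle\bm{\lambda}_t,\alpha\tilde{\bm{g}}_t(x_t^*)+(1-\alpha)\tilde{\bm{g}}_t(x_t^\circ)\bigr\rangle
\leq \alpha\langle\bm{\lambda}_t,\tilde{\bm{g}}_t(x_t^*)\rangle-\alpha\|\bm{\lambda}_t\|_1
=\alpha\sum_{i\in[M]}\lambda_{t,i}\bigl(\tilde g_{t,i}(x_t^*)-1\bigr)\leq 0 .
\]
This is exactly the paper's observation that the randomized comparator $\xi_t^\diamond$ is feasible in expectation round by round. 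With it, $\alpha\text{-}R_T\leq (T-\tau)-\sum_{t\le\tau}\langle\bm{\lambda}_t,\tilde{\bm{g}}_t(x_t)\rangle$, and Lemma~\ref{r_d} finishes the proof.

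\textbf{Consequences and minor corrections.} Lemma~\ref{selfb} plays no role: $\|\bm{\lambda}_t\|_1$ never enters the argument, which is why no $1/\rho$ factor can leak. The bound in fact holds deterministically, and the fallback to fractional comparators is unnecessary. Two constant-level corrections:
\begin{itemize}
\item OGD with comparator $\bm{0}$ and $\bm{\lambda}_1=\bm{0}$ gives $-\sum_{t\le\tau}\langle\bm{\lambda}_t,\tilde{\bm{g}}_t(x_t)\rangle\leq\frac{\eta}{2}TM$, not $\eta MT$. Only the former matches the $\sqrt{T}/(120\sqrt{2\ln(T^2/\delta)})$ term.
\item In Step 3, use one comparator consistently. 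Either take $\bm{e}_{m+k}/\beta_k$ as in Lemma~\ref{r_d} and then use $\beta_k\geq\beta_{\min}$, or take $\bm{e}_{m+k}/\beta_{\min}$, which also works since $\beta_k/\beta_{\min}\geq 1$.
\end{itemize}
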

Theorem~\ref{adv_reg} proves that our algorithm attains a bound of $\widetilde{\mathcal{O}}(\sqrt{T})$ for the $\alpha\text{-}$regret in the adversarial setting. The result follows from the regret guarantees of the unconstrained OGD update. We first split the $\alpha$-regret into the contribution up to the stopping time $\tau$ and the contribution after $\tau$. After $\tau$, the algorithm is forced to play $\varnothing$, and therefore the regret is trivially bounded by $T - \tau$. 
For the rounds up to $\tau$, we exploit the fact that the primal decision maximizes the instantaneous Lagrangian. To define the $\alpha$-regret, we evaluate the primal performance against a specifically tailored reference sequence $(\xi_t^\diamond)_{t=1}^T$. Specifically, for each round $t \in [T]$, we define the strategy $\xi_t^\diamond$ as the one which plays $x^*_t$ with probability $\alpha$ and $x^\circ_t$ with probability $1-\alpha$, where $\alpha = \nicefrac{\rho_\textnormal{adv}}{1 + \rho_\textnormal{adv}}$, $(x^*_t)_{t=1}^T$ is the optimal sequence of actions and $(x^\circ_t)_{t=1}^T$ is the sequence of strictly feasible actions that leads to the computation of $\rho_\textnormal{adv}$. By definition of $\alpha$, the sequence $(\xi^\diamond_t)_{t=1}^T$ is feasible, in expectation. This allows us to reduce the analysis to the cumulative penalty term induced by the algorithm $-\sum_{t \in [\tau]} \langle \bm{\lambda}_t, \bm{\tilde{g}}_t(x_t) \rangle$. As in Theorem~\ref{st_re}, this dual penalty is bounded by exploiting the regret guarantees of the OGD update. The detailed proof can be found in Appendix~\ref{app:adv}.

\section{Discussion and Open Problems}

In this work, we study a generalized online resource allocation setting that incorporates both budget constraints and general constraints. We develop an algorithm that achieves best-of-both-world guarantees in this broader framework. Specifically, against the dynamic benchmark, it attains $\widetilde{\mathcal O}(\sqrt{T})$ regret in the stochastic regime and $\alpha$-regret of order $\widetilde{\mathcal O}(\sqrt{T})$ in the adversarial regime, while ensuring strict satisfaction of the budget constraints and sublinear cumulative violation for the general ones. 

An interesting direction for future work is to understand whether the results provided by this work in the adversarial regime can be improved in non-stationary settings, where the environment may change over time but the amount of non-stationarity is assumed to be bounded.

\bibliographystyle{plainnat}
\bibliography{example_paper}

\newpage
\appendix

\section{Preliminary Results} 
\begin{lemma} \label{ogd}
	Let the dual algorithm be online gradient descent on the positive orthant $\mathbb{R}^{M}_+$ with learning rate $\eta > 0$. Then for any comparator $\bm{\mu} \in \mathbb{R}^{M}_+$, for any interval $[t_1, t_2] \subseteq [T]$ it holds:
	\[\sum_{t \in [t_1, t_2]} \langle \bm{\lambda}_t, \bm{\tilde{g}}_t \rangle \geq \sum_{t \in [t_1, t_2]} \langle \bm{\mu}, \bm{\tilde{g}}_t \rangle - \frac{1}{2\eta} \|\bm{\lambda}_{t_1} - \bm{\mu} \|^2_2 - \frac{\eta}{2} TM. \] 
\end{lemma}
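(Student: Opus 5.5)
The plan is the standard projected-OGD potential argument, applied to a sub-interval instead of the whole horizon. Fix $\bm{\mu}\in\mathbb{R}^M_{+}$ and a round $t\in[t_1,t_2]$. Since $\bm{\lambda}_{t+1}=\Pi_{\mathbb{R}^M_{\geq 0}}(\bm{\lambda}_t+\eta\bm{\tilde{g}}_t)$ and $\bm{\mu}$ lies in the convex set $\mathbb{R}^M_{\geq 0}$, projection onto a closed convex set is nonexpansive toward points of that set. This gives
\[
\|\bm{\lambda}_{t+1}-\bm{\mu}\|_2^2\le \|\bm{\lambda}_t+\eta\bm{\tilde{g}}_t-\bm{\mu}\|_2^2=\|\bm{\lambda}_t-\bm{\mu}\|_2^2+2\eta\langle \bm{\lambda}_t-\bm{\mu},\bm{\tilde{g}}_t\rangle+\eta^2\|\bm{\tilde{g}}_t\|_2^2 .
\]
Rearranging yields the per-round inequality
\[
\langle \bm{\mu}-\bm{\lambda}_t,\bm{\tilde{g}}_t\rangle\le \frac{1}{2\eta}\left(\|\bm{\lambda}_t-\bm{\mu}\|_2^2-\|\bm{\lambda}_{t+1}-\bm{\mu}\|_2^2\right)+\frac{\eta}{2}\|\bm{\tilde{g}}_t\|_2^2 .
\]
Here the dual player is treated as maximizing the linear reward $\langle\cdot,\bm{\tilde{g}}_t\rangle$, which matches the sign of the update.

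Next I sum this inequality over $t\in[t_1,t_2]$. The distance terms telescope to $\frac{1}{2\eta}(\|\bm{\lambda}_{t_1}-\bm{\mu}\|_2^2-\|\bm{\lambda}_{t_2+1}-\bm{\mu}\|_2^2)$. I drop the nonpositive final term, which leaves only the initial distance $\|\bm{\lambda}_{t_1}-\bm{\mu}\|_2^2$. This is the feature that yields weak adaptivity: the bound depends on where the iterate is at the start of the interval, not on $\bm{\lambda}_1$.

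For the second-order term, I bound each entry of $\bm{\tilde{g}}_t$. General constraints have $g_{t,i}\in[-1,1]$, and budget constraints have $h_{t,j}-\beta_j\in[-\beta_j,1-\beta_j]\subseteq[-1,1]$ because $\beta_j\in(0,1]$. Hence $\|\bm{\tilde{g}}_t\|_2^2\le M$. Summing over at most $t_2-t_1+1\le T$ rounds gives at most $\frac{\eta}{2}TM$. Rearranging then gives the claimed inequality.

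There is no real obstacle; the one point to check is the sign convention. The update is ascent on $\langle\bm{\lambda},\bm{\tilde{g}}_t\rangle$, so the guarantee has to be stated as $\sum\langle\bm{\lambda}_t,\bm{\tilde{g}}_t\rangle\ge\sum\langle\bm{\mu},\bm{\tilde{g}}_t\rangle-\dots$, and the expansion above is set up to deliver exactly that orientation. The only other point is the range bound $\beta_j\le 1$, which is implicit in the model: if $\beta_j>1$ the budget is never binding. If needed, I would state it explicitly when bounding $\|\bm{\tilde{g}}_t\|_2^2$.
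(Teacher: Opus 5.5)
Your proposal is correct and follows essentially the same argument as the paper: nonexpansiveness of the projection onto $\mathbb{R}^M_{\geq 0}$, expansion of the square, telescoping over $[t_1,t_2]$ while dropping the final distance term, and bounding $\sum_{t}\|\bm{\tilde{g}}_t\|_2^2 \le TM$. You also state explicitly the range assumption $\beta_j \le 1$ behind $\|\bm{\tilde{g}}_t\|_\infty \le 1$, which the paper uses without stating it.
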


\begin{proof}
	By the properties of Euclidean projection onto a convex set:
	\begin{align*}
		\|\bm{\lambda}_{t+1} - \bm{\mu}\|_2^2 &= \|\Pi_{\Lambda} (\bm{\lambda}_t + \eta \bm{\tilde{g}}_t) - \bm{\mu} \|_2^2 \\ 
		&\leq \|\bm{\lambda}_t + \eta \bm{\tilde{g}}_t - \bm{\mu}\|_2^2 \\
		&= \| \bm{\lambda}_t - \bm{\mu} \|_2^2 + \eta^2 \|\bm{\tilde{g}}_t\|_2^2 + 2\eta \langle \bm{\lambda}_t - \bm{\mu}, \bm{\tilde{g}}_t \rangle. 
	\end{align*} 
	Rearranging the terms: 
	\begin{align*}
		2\eta \langle \bm{\mu} - \bm{\lambda}_t, \bm{\tilde{g}}_t \rangle &\leq \| \bm{\lambda}_t - \bm{\mu} \|_2^2 - \|\bm{\lambda}_{t+1} - \bm{\mu}\|_2^2 + \eta^2 \|\bm{\tilde{g}}_t\|_2^2 \\ 
		\langle \bm{\mu} - \bm{\lambda}_t, \bm{\tilde{g}}_t \rangle &\leq \frac{1}{2\eta} \| \bm{\lambda}_t - \bm{\mu} \|_2^2 - \frac{1}{2\eta}\|\bm{\lambda}_{t+1} - \bm{\mu}\|_2^2 + \frac{\eta}{2} \|\bm{\tilde{g}}_t\|_2^2.
	\end{align*}
	Summing over $t$, we obtain: 
	\begin{align*}
		\sum_{t \in [t_1, t_2]}\langle \bm{\mu} - \bm{\lambda}_t, \bm{\tilde{g}}_t \rangle &\leq \frac{1}{2\eta} \| \bm{\lambda}_{t_1} - \bm{\mu} \|_2^2 - \frac{1}{2\eta}\|\bm{\lambda}_{t_2 + 1} - \bm{\mu}\|_2^2 + \frac{\eta}{2} \sum_{t \in [t_1, t_2]} \|\bm{\tilde{g}}_t\|_2^2 \\ 
		&\leq \frac{1}{2\eta} \| \bm{\lambda}_{t_1} - \bm{\mu} \|_2^2 - \frac{1}{2\eta}\|\bm{\lambda}_{t_2 + 1} - \bm{\mu}\|_2^2 + \frac{\eta}{2} T M \\ 
		&\leq \frac{1}{2\eta} \| \bm{\lambda}_{t_1} - \bm{\mu} \|_2^2 + \frac{\eta}{2} T M .
	\end{align*}
	Rearranging the terms gives the stated bound.
\end{proof}

\begin{lemma} \label{r_d}
	Let $\{ \bm{\lambda}_t \}_{t=1}^\tau$ be the sequence of dual multipliers generated by the OGD update rule (Lemma~\ref{ogd}) with learning rate $\eta > 0$ and $\bm{\lambda}_1 = \bm{0}$. Let $\tau \in [T]$ be the first round such that there exists a resource triggering the stopping condition, or
	$\tau = T$ if this condition is never met. Then, the dual penalty is bounded as: 
		\[
		- \sum_{t \in [\tau]} \langle \bm{\lambda}_t, \bm{\tilde{g}}_t(x_t) \rangle \leq -(T - \tau) + \frac{1}{\beta_{\min}} + \frac{1}{2\eta \beta_{\min}^2} +  \frac{\eta}{2} TM.
		\]
\end{lemma}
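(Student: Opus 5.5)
The plan is to apply the interval regret bound of Lemma~\ref{ogd} on the single interval $[1,\tau]$, with a comparator $\bm{\mu}$ chosen according to whether the stopping condition was triggered. Since $\bm{\lambda}_1=\bm{0}$, Lemma~\ref{ogd} with $t_1=1$, $t_2=\tau$ gives, for every $\bm{\mu}\in\mathbb{R}^M_{\geq 0}$,
\[
-\sum_{t\in[\tau]}\langle \bm{\lambda}_t,\bm{\tilde g}_t(x_t)\rangle \leq -\sum_{t\in[\tau]}\langle \bm{\mu},\bm{\tilde g}_t(x_t)\rangle + \frac{1}{2\eta}\|\bm{\mu}\|_2^2 + \frac{\eta}{2}TM .
\]
Lemma~\ref{ogd} uses the bound $\|\bm{\tilde g}_t\|_2^2\le M$. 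This holds here, because $g_{t,i}\in[-1,1]$ and $h_{t,j}-\beta_j\in[-\beta_j,1-\beta_j]\subseteq[-1,1]$, so every entry of $\bm{\tilde g}_t$ lies in $[-1,1]$.

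\emph{Case 1: the stopping condition is never met, so $\tau=T$.} Take $\bm{\mu}=\bm{0}$. The bound becomes $-\sum_{t\in[T]}\langle\bm{\lambda}_t,\bm{\tilde g}_t(x_t)\rangle\le \frac{\eta}{2}TM$. Since $-(T-\tau)=0$ and the remaining terms $\frac{1}{\beta_{\min}}$ and $\frac{1}{2\eta\beta_{\min}^2}$ are nonnegative, the claim follows.

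\emph{Case 2: some resource $k\in[n]$ triggers the stop at $\tau<T$.} By definition of $\tau$, the cumulative consumption of resource $k$ satisfies $\sum_{t\in[\tau]}h_{t,k}(x_t)\ge \beta_k T-1$. Take $\bm{\mu}=\frac{1}{\beta_k}\bm{e}_{m+k}$, which selects the budget coordinate of resource $k$. Then
\[
\sum_{t\in[\tau]}\langle\bm{\mu},\bm{\tilde g}_t(x_t)\rangle=\frac{1}{\beta_k}\Big(\sum_{t\in[\tau]}h_{t,k}(x_t)-\beta_k\tau\Big)\ge \frac{\beta_kT-1-\beta_k\tau}{\beta_k}=(T-\tau)-\frac{1}{\beta_k}.
\]
We also have $\|\bm{\mu}\|_2^2=1/\beta_k^2$. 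Substituting both into the display gives
\[
-\sum_{t\in[\tau]}\langle\bm{\lambda}_t,\bm{\tilde g}_t(x_t)\rangle\le -(T-\tau)+\frac{1}{\beta_k}+\frac{1}{2\eta\beta_k^2}+\frac{\eta}{2}TM .
\]
Finally, $\beta_k\ge\beta_{\min}$ lets us replace $\beta_k$ by $\beta_{\min}$ in the two positive terms, which yields the stated bound.

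The only delicate step is the consumption inequality $\sum_{t\le\tau}h_{t,k}(x_t)\ge\beta_kT-1$. It has to be reconciled with Line~\ref{algLine:4} of Algorithm~\ref{alg:alg1}, which tests the consumption up to round $t-1$ against $\beta_jT-1$, and with the paper's definition of $\tau$ as the round at which this threshold is reached. With the convention that $\tau$ is the first round after which the cumulative consumption of some resource is at least $\beta_kT-1$, the inequality holds by definition. If an off-by-one mismatch appeared, it would change the bound by at most an additive $O(1/\beta_{\min})$. Everything else is a direct substitution into Lemma~\ref{ogd}.
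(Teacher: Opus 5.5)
Your proposal is correct and matches the paper's proof step for step: Lemma~\ref{ogd} on $[1,\tau]$ with $\bm{\mu}=\bm{0}$ when $\tau=T$, and $\bm{\mu}=\bm{e}_{m+k}/\beta_k$ with the consumption bound $\sum_{t\in[\tau]}h_{t,k}(x_t)\ge\beta_kT-1$ when resource $k$ triggers the stop, followed by $\beta_k\ge\beta_{\min}$. Your two extra checks are points the paper leaves implicit: the bound $\|\bm{\tilde g}_t\|_2^2\le M$, which tacitly requires $\beta_j\le 1$, and the off-by-one convention for $\tau$.
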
 
\begin{proof}
	We analyze two cases by invoking Lemma~\ref{ogd} on the interval $[ \tau]$ with different comparators $\mu \in \mathbb{R}_+^M$:
	\begin{itemize}
		\item ($\tau = T$) We chose the comparator $\bm{\mu} = \bm{0}$. Since $\bm{\lambda}_1 = \bm{0}$, Lemma~\ref{ogd} directly yields: \begin{align*}
			- \sum_{t \in [\tau]} \langle \bm{\lambda}_t, \bm{\tilde{g}}_t(x_t) \rangle &\leq -\sum_{t \in [\tau]} \langle \bm{\mu}, \bm{\tilde{g}}_t(x_t) \rangle + \frac{1}{2\eta} \|\bm{\lambda}_1 - \bm{\mu} \|^2_2 + \frac{\eta}{2} TM \\ 
			&\leq \frac{\eta}{2} TM \\
		\end{align*} 
		\item ($\tau < T$) In this case, there exist a resource $k \in [n]$ that is exhausted at round $\tau$. By definition of the unified constraint vector, we have: 
		\begin{align}\sum_{t \in [\tau]} \tilde{g}_{t,m+k}(x_t) &= \sum_{t \in [\tau]} (h_{t,k}(x_t) - \beta_k) \nonumber\\
			&= \sum_{t \in [\tau]} h_{t,k}(x_t) - \tau\beta_k  \nonumber\\
			&\geq  (\beta_k T - 1) - \tau\beta_k \nonumber\\
			&= \beta_k(T - \tau) - 1 .\label{ct}
		\end{align}
		We invoke Lemma~\ref{ogd} choosing the comparator $\bm{\mu}$ such that $\mu_{m+k} = \frac{1}{\beta_k}$ and $\mu_{j} = 0$ for each $j \neq m + k$. Thus, we have: 
		\begin{align} -\sum_{t \in [\tau]}\langle \bm{\lambda}_t, \bm{\tilde{g}}_t(x_t) \rangle &\leq - \sum_{t \in [\tau]} \langle \bm{\mu}, \bm{\tilde{g}}_t(x_t) \rangle + \frac{1}{2\eta} \|\bm{\lambda}_1 - \bm{\mu} \|^2_2 + \frac{\eta}{2} TM \nonumber \\
			&= -\frac{1}{\beta_k} \sum_{t \in [\tau]}  \tilde{g}_{t, m+k} + \frac{1}{2\eta \beta_k^2} +  \frac{\eta}{2} TM \nonumber \\
			&\leq -(T - \tau) + \frac{1}{\beta_k} + \frac{1}{2\eta \beta_k^2} +  \frac{\eta}{2} TM,  \nonumber
		\end{align}
		where the last step follows from Inequality~\eqref{ct}. Therefore, in all cases it holds: 
		\[
		- \sum_{t \in [\tau]} \langle \bm{\lambda}_t, \bm{\tilde{g}}_t(x_t) \rangle \leq -(T - \tau) + \frac{1}{\beta_k} + \frac{1}{2\eta \beta_k^2} +  \frac{\eta}{2} TM.
		\]
		Defining $\beta_{\min} := \min_{j \in [n]} \beta_j$, it holds $\beta_k \geq \beta_{\min}$. Thus, we can write:  
		\[
		- \sum_{t \in [\tau]} \langle \bm{\lambda}_t, \bm{\tilde{g}}_t(x_t) \rangle \leq -(T - \tau) + \frac{1}{\beta_{\min}} + \frac{1}{2\eta \beta_{\min}^2} +  \frac{\eta}{2} TM.
		\]
	\end{itemize} 
	This concludes the proof.
\end{proof}
\section{Omitted Proofs and Lemmas of Section~\ref{dual_bound}} \label{app:dual}
\begin{lemma} \label{aux1}
	Let $\bm{\lambda}_t \in \mathbb{R}_{\geq 0}^m$ be generated by \textit{OGD} with learning rate $\eta$ and utilities $\bm{\lambda} \to \langle \bm{\lambda}, \bm{\tilde{g}}_{t} \rangle$ where $\|\bm{\tilde{g}}_t\|_{\infty} \leq 1$ for all $t \in [T]$. Then: 
	\[ \big| \| \bm{\lambda}_{t+1} \|_1 - \| \bm{\lambda}_{t} \|_1 \big| \leq \eta \cdot M .\] 
\end{lemma}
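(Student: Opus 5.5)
The plan is to bound the change coordinate by coordinate and then sum over the $M$ coordinates with the triangle inequality. The update in Line~\ref{algLine:10} projects onto the non-negative orthant, and that projection acts componentwise:
\[
\lambda_{t+1,i} = \max\{0, \lambda_{t,i} + \eta\, \tilde{g}_{t,i}(x_t)\}, \qquad i \in [M].
\]
Both $\bm{\lambda}_t$ and $\bm{\lambda}_{t+1}$ lie in $\mathbb{R}^M_{\geq 0}$, so their $\ell_1$-norms are simply the sums of their coordinates. This gives
\[
\big| \|\bm{\lambda}_{t+1}\|_1 - \|\bm{\lambda}_t\|_1 \big| = \Big| \sum_{i=1}^M (\lambda_{t+1,i} - \lambda_{t,i}) \Big| \leq \sum_{i=1}^M |\lambda_{t+1,i} - \lambda_{t,i}|.
\]

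The key step is to show that each coordinate moves by at most $\eta |\tilde{g}_{t,i}(x_t)|$. The map $a \mapsto \max\{0,a\}$ is $1$-Lipschitz, and $\lambda_{t,i} = \max\{0,\lambda_{t,i}\}$ because $\lambda_{t,i} \geq 0$. Hence
\[
|\lambda_{t+1,i} - \lambda_{t,i}| = \big|\max\{0,\lambda_{t,i}+\eta \tilde{g}_{t,i}(x_t)\} - \max\{0,\lambda_{t,i}\}\big| \leq \eta\, |\tilde{g}_{t,i}(x_t)|.
\]
Alternatively, one can invoke the non-expansiveness of Euclidean projection onto a convex set, applied coordinatewise, which is the same property already used in the proof of Lemma~\ref{ogd}.

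Finally I use the hypothesis $\|\bm{\tilde{g}}_t\|_\infty \leq 1$. This hypothesis is justified by the model. For $i \in [m]$ we have $g_{t,i} \in [-1,1]$. For the budget coordinates we have $h_{t,j}(x) - \beta_j \in [-\beta_j, 1-\beta_j] \subseteq [-1,1]$, using $0 < \beta_j \leq 1$; if $\beta_j > 1$ the budget constraint would be vacuous anyway. Summing the per-coordinate bound over the $M$ coordinates gives $\eta M$.

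There is no real obstacle here. The only point needing care is to justify that the orthant projection is componentwise, which is what permits the scalar Lipschitz argument.
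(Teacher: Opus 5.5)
Your proof is correct and follows essentially the same route as the paper: write the orthant projection componentwise as $\lambda_{t+1,i}=\max\{0,\lambda_{t,i}+\eta\tilde{g}_{t,i}\}$, show each coordinate moves by at most $\eta$, and sum over the $M$ coordinates. Using the $1$-Lipschitz property of $a\mapsto\max\{0,a\}$ is a slightly cleaner substitute for the paper's sign-based case split, which states only one-sided bounds in each case; your justification of $\|\bm{\tilde{g}}_t\|_\infty\le 1$ is harmless but unnecessary, since the lemma takes it as a hypothesis.
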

\begin{proof}
	By definition, the OGD update can be written as: 
	\[\lambda_{t+1,i} := \max \left\{0, \lambda_{t,i} + \eta \tilde{g}_{t,i}\right\} \quad \forall i \in [M]. \] 
	For all $i \in [M]$, $\lambda_{t+1,i} = \lambda_{t,i} + \eta \tilde{g}_{t,i} \leq \lambda_{t,i} + \eta$ when $\tilde{g}_{t,i} \geq 0$, while $\lambda_{t+1,i} \geq \lambda_{t,i} + \eta \tilde{g}_{t,i} \geq \lambda_{t,i} - \eta$ when $\tilde{g}_{t,i} < 0$. 
	Therefore, in all cases it holds: 
	\[| \lambda_{t+1,i} - \lambda_{t,i} | \leq \eta \quad \forall i \in [M].\] 
	Taking the sum over all components we obtain: 
	\[\| \bm{\lambda}_{t+1} - \bm{\lambda}_{t} \|_1 \leq \eta \cdot M.\] 
	By the triangular inequality, we get: 
	\[ \big| \| \bm{\lambda}_{t+1} \|_1 - \| \bm{\lambda}_{t} \|_1 \big| \leq \eta \cdot M. \] 
	This concludes the proof.
\end{proof}

\begin{lemma} \label{aux2}
	In the stochastic setting, for any $\pi(\cdot) : \Gamma \to  \mathcal{X}$ and $\delta \in (0,1)$, the following holds with probability at least $1 - \delta$: 
	\[\sum_{t \in [t_1, t_2]} \langle \bm{\lambda}_t, \bm{\tilde{g}}_t(\pi(\gamma_t)) \rangle\leq \sum_{t \in [t_1, t_2]} \mathbb{E}_{\gamma_t =(f_t, \bm{\tilde{g}}_t) \sim \mathcal{P}} \left[ \langle \bm{\lambda}_t,  \bm{\tilde{g}}_t(\pi(\gamma_t) \rangle \right]  + 2L \sqrt{2 T \ln\left(\frac{T^2}{\delta}\right)},\] 
	where $L = \max_{t \in [T]} \|\bm{\lambda}_t\|_1$ and $t_1, t_2 \in [T] : t_1 < t_2$.
\end{lemma}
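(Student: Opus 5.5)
The plan is to apply Azuma--Hoeffding to a martingale difference sequence, then take a union bound over all intervals. Let $\mathcal{F}_{t-1}$ denote the $\sigma$-algebra generated by $\gamma_1,\dots,\gamma_{t-1}$. The dual variable $\bm{\lambda}_t$ is a deterministic function of $\gamma_1,\dots,\gamma_{t-1}$ (through $x_1,\dots,x_{t-1}$), so it is $\mathcal{F}_{t-1}$-measurable. The input $\gamma_t$ is independent of $\mathcal{F}_{t-1}$ and distributed as $\mathcal{P}$. Define
\[
Z_t := \langle \bm{\lambda}_t, \bm{\tilde{g}}_t(\pi(\gamma_t)) \rangle - \mathbb{E}_{\gamma \sim \mathcal{P}}\left[ \langle \bm{\lambda}_t, \bm{\tilde{g}}(\pi(\gamma)) \rangle \right].
\]
In this definition the expectation integrates only over the fresh input, with $\bm{\lambda}_t$ held fixed; this is the reading of the expectation in the statement. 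Then $\mathbb{E}[Z_t \mid \mathcal{F}_{t-1}] = 0$, so $(Z_t)$ is a martingale difference sequence with respect to $(\mathcal{F}_t)$.

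Next I bound the increments. Each component of $\bm{\tilde{g}}_t$ lies in $[-1,1]$: general costs lie in $[-1,1]$, and $h - \beta \in [-1,1]$ because $\beta_j \le 1$, which holds without loss of generality. By Hölder's inequality, $|\langle \bm{\lambda}_t, \bm{\tilde{g}}_t(\cdot)\rangle| \le \|\bm{\lambda}_t\|_1 \le L$, and therefore $|Z_t| \le 2L$.

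The delicate point is that $L = \max_t \|\bm{\lambda}_t\|_1$ is itself random, so Azuma cannot be applied with $L$ as a fixed constant. I would handle this in one of two ways.

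\emph{Option 1: deterministic bound.} Use the bound $\|\bm{\lambda}_t\|_1 \le \eta M T$ from Lemma~\ref{aux1} and accept the resulting constant.

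\emph{Option 2: normalization (preferred).} Work with the normalized differences $Z_t / \max\{\|\bm{\lambda}_t\|_1, \epsilon\}$. The normalizer is $\mathcal{F}_{t-1}$-measurable, so these remain martingale differences, and each is bounded by $2$ in absolute value. However, rescaling back by $L$ requires care, since the normalizer varies with $t$.

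A cleaner alternative is to truncate. For a fixed deterministic threshold $\bar L$, consider $Z_t \mathbbm{1}\{\|\bm{\lambda}_t\|_1 \le \bar L\}$; the indicator is $\mathcal{F}_{t-1}$-measurable, so this is again a martingale difference sequence, now bounded by $2\bar L$. On the event $\{L \le \bar L\}$ the truncated and untruncated sums coincide. In the intended use, $\bar L = 14M/\rho$ is exactly what Lemma~\ref{selfb} aims to establish, and the contradiction argument there only needs the inequality up to the first time the threshold is crossed. I expect this measurability/boundedness issue to be the main obstacle. I would state the proof with the stopped or truncated sequence, and note that whenever $\|\bm{\lambda}_t\|_1 \le L$ holds throughout the interval, the bound holds with this $L$.

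Finally, fix an interval $[t_1,t_2]$. Azuma--Hoeffding with increments bounded by $2L$ gives, with probability at least $1-\delta'$,
\[
\sum_{t=t_1}^{t_2} Z_t \le 2L\sqrt{2(t_2-t_1+1)\ln(1/\delta')} \le 2L\sqrt{2T\ln(1/\delta')}.
\]
There are at most $T^2$ pairs $(t_1,t_2)$, so setting $\delta' = \delta/T^2$ and taking a union bound makes the inequality hold simultaneously for all intervals with probability at least $1-\delta$. Rearranging the sum of the $Z_t$ yields exactly the claimed inequality with the term $2L\sqrt{2T\ln(T^2/\delta)}$.
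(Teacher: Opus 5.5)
Your proposal is correct and follows the paper's argument: Azuma--Hoeffding applied to the martingale differences $Z_t$ with increments bounded by $2L$, then a union bound over the at most $T^2$ intervals with $\delta'=\delta/T^2$. The one difference is a gain in rigor: the paper plugs the random $L=\max_{t\in[T]}\|\bm{\lambda}_t\|_1$ into Azuma as if it were deterministic, whereas your truncation by the $\mathcal{F}_{t-1}$-measurable indicator $\mathbbm{1}\{\|\bm{\lambda}_t\|_1\le \bar L\}$ makes the step valid and gives exactly what Lemma~\ref{selfb} uses, provided you take $\bar L = 14M/\rho + M\eta$ (the value the paper substitutes for $L$) so that round $t_2$, where $\|\bm{\lambda}_{t_2}\|_1$ may slightly exceed $14M/\rho$, is not truncated away.
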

\begin{proof} 
	Let $\mathcal{F}_{t-1}$ be the filtration representing the history up to the end of round $t-1$. 
	Since: 
	\[ \mathbb{E} \left[ \langle \bm{\lambda}_t, \bm{\tilde{g}}_t(\pi(\gamma_t)) \rangle \;\middle|\; \mathcal{F}_{t-1} \right] = \mathbb{E}_{\gamma_t =(f_t, \bm{\tilde{g}}_t) \sim \mathcal{P}} \left[ \langle \bm{\lambda}_t,  \bm{\tilde{g}}_t(\pi(\gamma_t)) \rangle  \right], \]
	it follows that the quantity $\langle \bm{\lambda}_t, \bm{\tilde{g}}_t(\pi(\gamma_t)) \rangle - \mathbb{E}_{\gamma_t =(f_t, \bm{\tilde{g}}_t) \sim \mathcal{P}} \left[ \langle \bm{\lambda}_t,  \bm{\tilde{g}}_t(\pi(\gamma_t)) \rangle  \right]$ is a martingale difference. 
	By Azuma-Hoeffding inequality, given an interval $[t_1, t_2] \subseteq [T]$, with probability at least $1 - \delta'$ it holds: 
	\begin{align*}
		\sum_{t \in [t_1, t_2]} \langle \bm{\lambda}_t, \bm{\tilde{g}}_t(\pi(\gamma_t)) \rangle &- \sum_{t \in [t_1, t_2]} \mathbb{E}_{\gamma_t =(f_t, \bm{\tilde{g}}_t) \sim \mathcal{P}} \left[ \langle \bm{\lambda}_t,  \bm{\tilde{g}}_t(\pi(\gamma_t)) \rangle  \right]\\  
		&\leq 2L \sqrt{2(t_2 - t_1) \ln\left(\frac{1}{\delta'}\right)} \leq 2L \sqrt{2 T \ln\left(\frac{1}{\delta'}\right)}.\ 
	\end{align*}
	Taking $\delta = T^2 \delta'$, we obtain that with probability at least $1 - \delta$, by union bound over all possible intervals it holds: 
	\[\sum_{t \in [t_1, t_2]} \langle \bm{\lambda}_t, \bm{\tilde{g}}_t(\pi(\gamma_t)) \rangle - \sum_{t \in [t_1, t_2]} \mathbb{E}_{\gamma_t =(f_t, \bm{\tilde{g}}_t) \sim \mathcal{P}} \left[ \langle \bm{\lambda}_t,  \bm{\tilde{g}}_t(\pi(\gamma_t)) \rangle  \right] \leq 2L \sqrt{2 T \ln\left(\frac{T^2}{\delta}\right)}.\] 
	This concludes the proof.
\end{proof}

\begin{lemma} \label{b4}
	Given an interval $[t_1, t_2] \subseteq [T]$ such that $\|\bm{\lambda}_t\|_1 \in [\nicefrac{c_1}{\rho}, \nicefrac{c_2}{\rho}]$ on $[t_1, t_2]$, $\|\bm{\lambda}_{t_2}\|_1 \geq \nicefrac{c_2}{\rho}$ and $\|\bm{\lambda}_{t_1}\|_1 \leq \nicefrac{c_1}{\rho}$. Given a learning rate $\eta > 0$, it holds: 
	\[\sum_{t \in [t_1,t_2]} \langle \bm{\lambda}_t, \bm{\tilde{g}}_t(x_t) \rangle \geq \frac{M}{2\rho^2 \eta}.\]
\end{lemma}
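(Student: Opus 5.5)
The plan is to avoid the regret bound of Lemma~\ref{ogd} with a fixed comparator and instead telescope the squared Euclidean norm of the multipliers along the OGD dynamics. The final bound will be
\[
\sum_{t\in[t_1,t_2]}\langle\bm{\lambda}_t,\bm{\tilde g}_t(x_t)\rangle \;\geq\; \frac{\|\bm{\lambda}_{t_2}\|_2^2-\|\bm{\lambda}_{t_1}\|_2^2}{2\eta} \;-\;(\text{lower-order terms}),
\]
after which the hypotheses on the $\ell_1$-norms at the endpoints turn the difference of squared norms into something of order $M/\rho^2$.

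First, since projection onto $\mathbb{R}^M_{\geq 0}$ is non-expansive and $\bm 0$ lies in the orthant, we have $\|\bm{\lambda}_{t+1}\|_2^2\le\|\bm{\lambda}_t+\eta\bm{\tilde g}_t(x_t)\|_2^2$. (Alternatively, note that coordinatewise $|\max\{0,a\}|\le|a|$.) Expanding the square gives
\[
\|\bm{\lambda}_{t+1}\|_2^2 \le \|\bm{\lambda}_t\|_2^2+2\eta\langle\bm{\lambda}_t,\bm{\tilde g}_t(x_t)\rangle+\eta^2 M,
\]
using $\|\bm{\tilde g}_t\|_\infty\le1$. Summing over $t\in[t_1,t_2-1]$ and rearranging yields
\[
\sum_{t=t_1}^{t_2-1}\langle\bm{\lambda}_t,\bm{\tilde g}_t(x_t)\rangle\ge \frac{\|\bm{\lambda}_{t_2}\|_2^2-\|\bm{\lambda}_{t_1}\|_2^2}{2\eta}-\frac{\eta}{2}TM.
\]
The missing term $t=t_2$ is bounded below by $-\|\bm{\lambda}_{t_2}\|_1$, since $|\tilde g_{t,i}|\le 1$. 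That quantity is at most $c_2/\rho+\eta M$ by Lemma~\ref{aux1}, which is $\mathcal O(1/\rho)$ and therefore negligible.

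Second, I convert norms. From $\|\bm{\lambda}\|_2\ge\|\bm{\lambda}\|_1/\sqrt M$ and $\|\bm{\lambda}\|_2\le\|\bm{\lambda}\|_1$ we get
\[
\|\bm{\lambda}_{t_2}\|_2^2\ge \frac{c_2^2}{M\rho^2},\qquad \|\bm{\lambda}_{t_1}\|_2^2\le \frac{c_1^2}{\rho^2},
\]
so the leading term is at least $(c_2^2/M-c_1^2)/(2\eta\rho^2)$. The constants $c_1,c_2$ are those fixed in the proof of Lemma~\ref{selfb}, where $c_2$ is of order $M$ (e.g.\ $14M$) and $c_1$ is a much smaller threshold. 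For such a choice, $c_2^2/M-c_1^2$ exceeds $2M$ with room to spare. The surplus, of order $M/(\rho^2\eta)$, must then absorb the two error terms. With $\eta=1/(60M\sqrt{2T\ln(T^2/\delta)})$, the term $\frac{\eta}{2}TM=\Theta(\sqrt{T/\ln(T^2/\delta)})$ is dominated by $M/(\rho^2\eta)=\Theta(M^2\sqrt{T\ln(T^2/\delta)}/\rho^2)$, because $\rho\le1$. The $\mathcal O(1/\rho)$ boundary term is likewise dominated. This leaves the claimed $M/(2\rho^2\eta)$.

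The main obstacle is the loss from the $\ell_1$-to-$\ell_2$ conversion: the lower bound on $\|\bm{\lambda}_{t_2}\|_2^2$ loses a factor $M$. This is why the separation between $c_1$ and $c_2$ must be of order $M$ rather than a constant, namely $c_2^2\ge M(c_1^2+M)+\text{slack}$. I would first verify this inequality for the specific constants used in Lemma~\ref{selfb}. If it fails, I would instead take the comparator $\bm{\mu}=\bm{\lambda}_{t_2}$ in Lemma~\ref{ogd} together with the coordinatewise inequality $\eta\sum_{t}\tilde g_{t,i}\ge\lambda_{t_2,i}-\lambda_{t_1,i}$. That route gives the same quantity $\big(\|\bm{\lambda}_{t_2}\|_2^2-\|\bm{\lambda}_{t_1}\|_2^2\big)/(2\eta)$, so in that case the constants themselves would have to be adjusted.
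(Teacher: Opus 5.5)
Your proof is correct, and it takes a different and lighter route than the paper's.

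\textbf{The paper's route.} The paper argues coordinatewise. For each $i$ it splits $[t_1,t_2]$ at the last round $\bar t_i$ with $\lambda_{t,i}=0$ and applies Lemma~\ref{ogd} with comparator $\bm{0}$ before $\bar t_i$. After $\bar t_i$ no projection occurs, so $\eta\sum_t\tilde g_{t,i}$ telescopes exactly to $\lambda_{t_2,i}-\lambda_{\bar t_i,i}$. Feeding this into Lemma~\ref{ogd} with comparator $\mu_i=1/\rho$ gives a gain $(\|\bm{\lambda}_{t_2}\|_1-\|\bm{\lambda}_{t_1}\|_1)/(\rho\eta)$, linear in the $\ell_1$-norm. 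The cost is $\|\bm{\mu}\|_2^2/(2\eta)=M/(2\rho^2\eta)$ plus the extra condition $\eta\le 1/(\rho M)$.

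\textbf{Your route.} You take $\bm{\mu}=\bm{0}$ but keep the terminal term $\|\bm{\lambda}_{t_2}\|_2^2/(2\eta)$ that Lemma~\ref{ogd} throws away. Non-expansiveness of the projection handles all coordinates and rounds at once, so no last-zero bookkeeping is needed. You also treat the endpoint $t=t_2$ explicitly, whereas the paper's identity for $\sum_{t\in[\bar t_i,t_2]}\tilde g_{t,i}$ is off by one there. Your price is the factor $M$ lost in $\|\bm{\lambda}_{t_2}\|_2^2\ge\|\bm{\lambda}_{t_2}\|_1^2/M$, so both routes need $c_2=\Theta(M)$.

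\textbf{The check you postponed.} It is immediate with the paper's $c_1=2$, $c_2=14M$. Even using $\|\bm{\lambda}_{t_1}\|_1\le c_1/\rho+M\eta\le 3/\rho$ (as in Lemma~\ref{selfb}), the leading term is at least $(196M-9)/(2\eta\rho^2)$. Meanwhile $\eta^2T\le 1$ and $\rho\le 1$ give $\tfrac{\eta}{2}TM+\tfrac{14M}{\rho}+\eta M\le \tfrac{31M}{2\eta\rho^2}$. That leaves well over $M/(2\rho^2\eta)$. Like the paper, you need $\eta$ small (here $\eta\le 1/\sqrt{T}$), so the lemma's ``given $\eta>0$'' is too generous in both versions.

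\textbf{The fallback.} Drop it. Besides being unnecessary, its key inequality is reversed: projection can only raise coordinates, hence $\eta\sum_t\tilde g_{t,i}\le\lambda_{t_2,i}-\lambda_{t_1,i}$. Equality holds only after the last zero, which is exactly why the paper introduces $\bar t_i$.
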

\begin{proof}
	We analyze the cumulative penalty term during the interval $[t_1, t_2]$. For each constraint index $i \in [M]$, let $\bar{t}_i \in [t_1, t_2]$ be the last round such that the dual multiplier is zero. If $\lambda_{t,i} > 0$ for the entire interval, we simply set $\bar{t}_i = t_1$. Formally:
	\[
	\bar{t}_i = \max\left\{ t_1, \max_{\tau \in [t_2] : \lambda_{\tau, i}=0} \ \tau \right\}.
	\]
	We can now split the analysis into two separate sub-intervals: $[t_1, \bar{t}_i]$ and $[\bar{t}_i, t_2]$.
	\begin{itemize}
		\item $(t \in [t_1, \bar{t}_i])$: By definition, it can be either that $\lambda_{\bar{t}_i}=0$ or $\bar{t}_i = t_1$. In the latter case, $[t_1, \bar{t}_i] = \varnothing$ and the dual regret is trivially zero. In the former case, by Lemma~\ref{ogd} with comparator $\bm{\mu} = \bm{0}$ we obtain: 
		\begin{align}
			0 \leq \sum_{t \in [t_1,\bar{t}_i]} \lambda_{t,i} \tilde{g}_{t,i}(x_t) + \frac{\lambda^2_{t_1}}{2\eta} + \frac{1}{2}\eta T. \label{gd}
		\end{align}
		\item ($t \in [\bar{t}_i,t_2]$): Due to the definition of $\bar{t}_i$, gradient descent never projects the multiplier relative to constraint $i$, and we can write that
		\begin{align}
			\sum_{t \in [\bar{t}_i,t_2]} \tilde{g}_{t,i}(x_t) = \frac{\lambda_{t_2,i} - \lambda_{\bar{t}_i,i}}{\eta}. \label{gd2}
		\end{align}
		By Lemma~\ref{ogd}, choosing the comparator $\bm{\mu}$ such that $\mu_i = \nicefrac{1}{\rho}$ for all $i \in [M]$ we obtain: 
		\begin{align}
			\sum_{t \in [\bar{t}_i, t_2]} \lambda_{t,i} \tilde{g}_{t,i}(x_t) &\geq \sum_{t \in [\bar{t}_i,t_2]} \mu_i \tilde{g}_{t,i}(x_t) -\frac{(\mu_i - \lambda_{\bar{t}_i,i})^2}{2\eta}
			- \frac{1}{2}\eta T \nonumber \\
			&= \frac{\lambda_{t_2,i} - \lambda_{\bar{t}_i,i}}{\rho \eta}
			- \frac{(\mu_i - \lambda_{\bar{t}_i, i})^2}{2\eta}
			- \frac{1}{2}\eta T, \label{gd3}
		\end{align}
		where we used Equation~\eqref{gd2}.
	\end{itemize}
	We can now combine the results from both sub-intervals. Summing Inequalities~\eqref{gd}-\eqref{gd3} we obtain:
	\begin{align*}
		\sum_{t \in [t_1,t_2]} \lambda_{t,i} \tilde{g}_{t,i}(x_t) &\geq \frac{\lambda_{t_2,i} - \lambda_{\bar{t}_i,i}}{\rho\eta} - \frac{(\mu_i - \lambda_{\bar{t}_i,i})^2}{2\eta} - \frac{\lambda_{t_1}^2}{2\eta} - \eta T \\
		&\geq \frac{\lambda_{t_2,i} - \lambda_{\bar{t}_i,i}}{\rho\eta} - \frac{(\mu_i)^2 + \lambda_{\bar{t}_i,i}^2}{2\eta} - \frac{\lambda_{t_1}^2}{2\eta} - \eta T.
	\end{align*}
	Finally, by summing over all $i \in [M]$, and defining $\bm{\lambda}_{\bar{t}}$ as the vector that has $\lambda_{\bar{t}_i}$ as its $i$-th component, we get:
	\begin{align}
		\sum_{t \in [t_1,t_2]} \langle \bm{\lambda}_t, \bm{\tilde{g}}_t(x_t) \rangle
		&\geq \frac{\|\bm{\lambda}_{t_2}\|_1 - \|\bm{\lambda}_{\bar{t}}\|_1}{\rho\eta} - \frac{1}{2\eta} \left( \|\bm{\mu}\|_2^2 + \|\bm{\lambda}_{\bar{t}}\|_2^2 + \|\bm{\lambda}_{t_1}\|_2^2 \right) - \frac{M}{\eta} \label{b1} \\
		&\geq \frac{c_2}{\rho^2 \eta} - \frac{1}{\rho\eta}\|\bm{\lambda}_{t_1}\|_1 - \frac{1}{2\eta} \left( \|\bm{\mu}\|_2^2 + 2\|\bm{\lambda}_{t_1}\|_2^2 \right) - \frac{M}{\eta} \label{b2} \\ 
		&\geq \frac{c_2}{\rho^2 \eta} - \frac{1}{\rho\eta} \left( \frac{c_1}{\rho} + M\eta \right) - \frac{1}{2\eta} \left( \frac{M}{\rho^2} + 2\left( \frac{c_1}{\rho} + M\eta \right)^2 \right) - \frac{M}{\eta} \nonumber \\
		&\geq \frac{c_2}{\rho^2 \eta} - \frac{c_1 + 1}{\rho^2 \eta} - \frac{M}{2\rho^2 \eta} - \frac{2(c_1 + 1)^2}{2\rho^2 \eta} - \frac{M}{\eta} \label{b3} \\
		&\geq \frac{2c_2 - 6 - M - 18 - 2 M}{2\rho^2 \eta} \nonumber \\
		&\geq \frac{M}{2\rho^2 \eta}, \nonumber
	\end{align}
	where Inequality~\eqref{b1} holds by definition of $\eta \leq \nicefrac{1}{\sqrt{T}}$, Inequality~\eqref{b2} is due to the fact that $\|\bm{\lambda}\|_1 \geq c_2/\rho $ and $\|\bm{\lambda}_{\bar{t}}\|_1 \leq \|\bm{\lambda}_{t_1}\|_1$ and Inequality~\eqref{b3} follows from the condition that $\eta \leq \nicefrac{1}{\rho M}$.
	The last two inequalities hold due to the choice of parameters, that is, $c_1 = 2$ and $c_2 = 14M$. This concludes the proof.
\end{proof}

\selfb* 
\begin{proof}
	Let $c_1 := 2$, $c_2 := 14M$ and $\eta>0$ such that $\eta \leq \eta_{\text{OGD}}$. Moreover, let $\rho = \rho_{\text{stoc}}$ and $\rho = \rho_{\text{adv}}$, respectively in the stochastic and in the adversarial setting. By contradiction, let us suppose that at some point  $\|\bm{\lambda}_t\|_1 \geq \frac{c_2}{\rho}$. Let $t_2 \in [T]$ be the first round in which this condition happens. Moreover, let us define $t_1$ as the last round such that for all $t \in [t_1, t_2]$, it holds $\|\bm{\lambda}_t\|_1\in \left[ \frac{c_1}{\rho}, \frac{c_2}{\rho} \right]$. 
	
	From Lemma~\ref{aux1}, we have: 
	\[ \big| \| \bm{\lambda}_{t+1} \|_1 - \| \bm{\lambda}_{t} \|_1 \big| \leq M \eta.\]  
	Telescoping the sum for all $t \in [t_1, t_2]$ we obtain:  
	\[\| \bm{\lambda}_{t_2} \|_1 - \| \bm{\lambda}_{t_1} \|_1 \leq M \eta (t_2 - t_1).\]    
	Moreover, by definition of $t_1$ and $t_2$ we can write the following inequalities: 
	\[\frac{c_1}{\rho} \leq \| \bm{\lambda}_{t_1} \|_1 \leq \| \bm{\lambda}_{t_1-1} \|_1 + M \eta \leq \frac{c_1}{\rho} + M \eta, \]  
	\[\frac{c_2}{\rho} \leq \| \bm{\lambda}_{t_2} \|_1 \leq \| \bm{\lambda}_{t_2-1} \|_1 + M \eta \leq \frac{c_2}{\rho} + M \eta. \] 
	
	Putting everything together, we obtain: 
	\begin{align}
		\frac{c_2}{\rho} - \frac{c_1}{\rho} - M \eta \leq \| \bm{\lambda}_{t_2} \|_1 &- \| \bm{\lambda}_{t_1} \|_1 \leq M \eta (t_2 - t_1) \nonumber \\
		\frac{c_2 - c_1}{\rho} - M \eta &\leq  M \eta (t_2 - t_1) \nonumber \\
		t_2 - t_1 \geq \frac{c_2 - c_1}{\rho M \eta } &- 1 \geq \frac{c_2 - c_1}{2 \rho M \eta }. \label{dif} 
	\end{align}
	
	We now provide a lower bound for the quantity $ \sum_{t \in [t_1, t_2]} \mathcal{L}_{f_t,\bm{\tilde{g}}_t}(x_t, \bm{\lambda}_t)$, holding for both the stochastic and the adversarial setting. 
	
	Since the primal decision $x_t$ is such that $x_t = \argmax_{x \in \mathcal{X}} \mathcal{L}_{f_t,\bm{\tilde{g}}_t}(x, \bm{\lambda}_t)$, it point-wise dominates any other action. Let $\pi^\circ : \Gamma \to \mathcal{X}$ be the function leading to the computation of $\rho_\text{stoc}$, i.e. $\sum_{t \in [t_1, t_2]} \mathbb{E}_{\gamma_t =(f_t, \bm{\tilde{g}}_t) \sim \mathcal{P}} \left[ \bm{\tilde{g}}_t(\pi^\circ(\gamma_t))\right] \leq - \rho_{\text{stoc}}$ for all $i \in [M]$  
	Thus, in the stochastic setting, we can write: 
	\begin{align}
		\sum_{t \in [t_1, t_2]} &\mathcal{L}_{f_t,\bm{\tilde{g}}_t}(x_t, \bm{\lambda}_t) \\ 
		&\geq \sum_{t \in [t_1, t_2]}  \mathcal{L}_{f_t,\bm{\tilde{g}}_t}(\pi^\circ(\gamma_t), \bm{\lambda}_t) \nonumber \\ 
		&= \sum_{t \in [t_1, t_2]}  f_t(\pi^\circ(\gamma_t)) - \langle \bm{\lambda}_t, \bm{\tilde{g}}_t(\pi^\circ(\gamma_t)) \rangle \nonumber \\
		&\geq - \sum_{t \in [t_1, t_2]} \langle \bm{\lambda}_t, \bm{\tilde{g}}_t(\pi^\circ(\gamma_t)) \rangle \nonumber \\ 
		&\geq - \sum_{t \in [t_1, t_2]} \mathbb{E}_{\gamma_t =(f_t, \bm{\tilde{g}}_t) \sim \mathcal{P}} \left[  \langle \bm{\lambda}_t,  \bm{\tilde{g}}_t(\pi^\circ(\gamma_t)) \rangle \right] - 2L \sqrt{2 T \ln\left(\frac{T^2}{\delta}\right)} \label{ahb} \\ 
		&\geq \rho_{\text{stoc}} \sum_{t \in [t_1, t_2]} \| \bm{\lambda}_t \|_1 - 2L \sqrt{2 T \ln\left(\frac{T^2}{\delta}\right)} \label{r1} \\ 
		&\geq c_1 (t_2 - t_1) - 2\left( \frac{c_2}{\rho_{\text{stoc}}} + M\eta\right) \sqrt{2 T \ln\left(\frac{T^2}{\delta}\right)}, \nonumber 
	\end{align}    
	where Inequality~\eqref{ahb} holds by Lemma~\ref{aux2}, and Inequality~\eqref{r1} holds by definition of $\pi^\circ$. 
	
	In the adversarial setting, let $(x_t^\circ)_{t=1}^T$ be the sequence of strictly feasible actions that leads to the definition of $\rho_{\text{adv}}$, meaning $\tilde{g}_{t,i}(x^\circ_t) \leq -\rho_{\text{adv}}$ for all $i \in [M]$. It holds: 
	\begin{align}
		\sum_{t \in [t_1, t_2]} \mathcal{L}_{f_t,\bm{\tilde{g}}_t}(x_t, \bm{\lambda}_t) &\geq  \sum_{t \in [t_1, t_2]} \mathcal{L}_{f_t,\bm{\tilde{g}}_t}(x^\circ_t, \bm{\lambda}_t) \\ &=  \sum_{t \in [t_1, t_2]} f_t(x^\circ_t) - \langle \bm{\lambda}_t, \bm{\tilde{g}}_t(x^\circ_t) \rangle \nonumber \\
		&\geq -  \sum_{t \in [t_1, t_2]}  \langle \bm{\lambda}_t, \bm{\tilde{g}}_t(x^\circ_t) \rangle \nonumber \\ 
		&\geq \rho_{\text{adv}} \sum_{t \in [t_1, t_2]} \| \bm{\lambda}_t \|_1 \label{r2} \\ 
		&\geq c_1(t_2 - t_1), \nonumber
	\end{align} 
	where Inequality~\eqref{r2} holds by definition of $(x^\circ_t)_{t=1}^T$.
	
	Thus, for both settings it holds: 
	\begin{align} \sum_{t \in [t_1, t_2]} \mathcal{L}_{f_t,\bm{\tilde{g}}_t}(x_t, \bm{\lambda}_t) \geq c_1 (t_2 - t_1) - 2\left( \frac{c_2}{\rho} + M\eta\right) \sqrt{2 T \ln\left(\frac{T^2}{\delta}\right)}. \label{lb} \end{align}
	We employ Lemma~\ref{b4} to upper bound the following term $\sum_{t \in [t_1, t_2]} \mathcal{L}_{f_t,\bm{\tilde{g}}_t}(x_t, \bm{\lambda}_t)$:
	\begin{align}
		\sum_{t \in [t_1, t_2]} \mathcal{L}_{f_t,\bm{\tilde{g}}_t}(x_t, \bm{\lambda}_t) &= \sum_{t \in [t_1, t_2]} ( f_t(x_t) - \langle \bm{\lambda}_t, \bm{\tilde{g}}_{t}(x_t) \rangle ) \nonumber \\ 
		&\leq \sum_{t \in [t_1, t_2]} f_t(x_t)  -  \frac{M}{2 \rho^2 \eta} \nonumber \\
		&\leq (t_2 - t_1) -  \frac{M}{2 \rho^2 \eta}. \label{ub}
	\end{align}
	
	Putting together Inequalities~\eqref{lb}-\eqref{ub} we obtain: 
	\begin{gather}
		c_1 (t_2 - t_1) - 2\left( \frac{c_2}{\rho} + M\eta\right) \sqrt{2 T \ln\left(\frac{T^2}{\delta}\right)} \leq \sum_{t \in [t_1, t_2]} \mathcal{L}_{f_t,\bm{\tilde{g}}_t}(x_t, \bm{\lambda}_t) \leq  (t_2 - t_1) - \frac{M}{2 \rho^2 \eta} \nonumber \\
		c_1 (t_2 - t_1) - 2\left( \frac{c_2}{\rho} + M\eta\right) \sqrt{2 T \ln\left(\frac{T^2}{\delta}\right)} \leq (t_2 - t_1) - \frac{M}{2 \rho^2 \eta} \nonumber \\
		t_2 - t_1 \leq \frac{1}{c_1 - 1} \left[ 2\left( \frac{c_2}{\rho} + M\eta\right) \sqrt{2 T \ln\left(\frac{T^2}{\delta}\right)}- \frac{M}{2 \rho^2 \eta} \right] \nonumber \\
		\frac{c_2 - c_1}{2 \rho M \eta } \leq \frac{1}{c_1 - 1} \left[ 2\left( \frac{c_2}{\rho} + M\eta\right) \sqrt{2 T \ln\left(\frac{T^2}{\delta}\right)}- \frac{M}{2 \rho^2 \eta} \right], 
	\end{gather}
	
	where the last inequality holds by Inequality~\eqref{dif}. By substituting the constants $c_1 = 2$ and $c_2 = 14M$, we get:
	\begin{gather}
		\frac{14M - 2}{2 \rho M \eta } \leq 2\left( \frac{14M}{\rho} + M\eta\right) \sqrt{2 T \ln\left(\frac{T^2}{\delta}\right)} - \frac{M}{2 \rho^2 \eta} \nonumber \\
		\frac{\rho(14M - 2) + M^2}{2 \rho^2 M \eta} \leq 2\left( \frac{14M}{\rho} + M\eta\right) \sqrt{2 T \ln\left(\frac{T^2}{\delta}\right)} \nonumber \\
		\frac{M^2 + 14 \rho M - 2\rho}{2 \rho^2 M \eta} \leq \frac{2(14M + 1)}{\rho} \sqrt{2 T \ln\left(\frac{T^2}{\delta}\right)} \nonumber,
	\end{gather}
	where we used the standard assumption that $\eta \leq \frac{1}{\rho M}$.
	
	Rearranging, we obtain:
	\begin{align*}
		\eta \geq \frac{M^2 + 14 \rho M - 2\rho}{4 \rho M (14M + 1) \sqrt{2 T \ln\left(\frac{T^2}{\delta}\right)}},
	\end{align*}
	
	which yields a contradiction since, by hypothesis, it holds: 
	\begin{align*}
		\eta \leq \eta_{\text{OGD}} &:= \frac{1}{60  M \sqrt{2 T \ln\left(\frac{T^2}{\delta}\right)}}
		< \frac{M^2 + 14 \rho M - 2\rho}{4 \rho M (14M + 1) \sqrt{2 T \ln\left(\frac{T^2}{\delta}\right)}}
	\end{align*}
	
	Thus, we conclude that the assumption must be false, implying $\| \bm{\lambda}_t \|_1 \leq \frac{c_2}{\rho} = \frac{14M}{\rho}$ for each $t \in [T]$.
	This concludes the proof.
\end{proof} 

\section{Omitted Proofs of Section~\ref{sec:violation}}\label{app:v}
\vt*
\begin{proof}
	Let $\tau \leq T$ be the stopping time at which the budget is exhausted. By design of the algorithm, if playing the candidate action exceeds the budget $\beta_j T$ for any resource $j \in [n]$, the algorithm is forced to play the void action $x_t = \varnothing$, which yields zero consumption ($h_{t,j}(\varnothing) = 0$ for all $j \in [n]$). Thus, the budget constraints are never violated, i.e., it holds $\sum_{t \in [T]} h_{t,j}(x_t) = \sum_{t \in [\tau]} h_{t,j}(x_t) \leq \beta_j T$ for all $j \in [n]$.
	For the general constraints, by definition of the OGD update, it holds:
	\[\lambda_{t+1,i} \geq \lambda_{t,i} + \eta g_{t,i} \quad \forall t \in [\tau],  \forall i \in [m]. \] 
	
	Telescoping the sum for all $t \in [\tau]$ we obtain:
	\begin{align}
		\sum_{t \in [\tau ]}\lambda_{t+1,i} &\geq \sum_{t \in [\tau]} \lambda_{t,i} + \eta \sum_{t \in [\tau]} g_{t,i} \nonumber \\
		\lambda_{\tau+1,i} &\geq \lambda_{1,i} + \eta \sum_{t \in [\tau]} g_{t,i}. \nonumber
	\end{align}
	
	Therefore, we can write:
	\begin{align}
		\sum_{t \in [\tau]} g_{t,i}  &\leq \frac{1}{\eta} \lambda_{\tau+1,i} \nonumber \\ 
		&\leq \frac{1}{\eta} \| \bm{\lambda}_{\tau+1} \|_1 \nonumber \\
		&\leq \frac{1}{\eta} \left( \frac{14M}{\rho} \right) \label{sbl} \\
		&= 60 M \sqrt{2T \ln\left( \frac{T^2}{\delta} \right)} \cdot \left( \frac{14M}{\rho} \right) \nonumber \\ 
		&= 840 \frac{M^2}{\rho} \sqrt{2T \ln\left( \frac{T^2}{\delta} \right)}, \nonumber 
	\end{align} 
	
	where Inequality~\eqref{sbl} holds by Lemma~\ref{selfb} with probability at least $1 - \delta$. Since from round $\tau+1$ the algorithm only plays $\varnothing$, we have: 
	\[V_T  := \max_{i \in [m]} \sum_{t \in [T]}g_{t,i}(x_t) = \max_{i \in [m]} \sum_{t \in [\tau]}g_{t,i}(x_t). \]
	Thus, with probability at least $1 - \delta$ it holds: 
	\[V_T  \leq 840 \frac{M^2}{\rho} \sqrt{2T \ln\left( \frac{T^2}{\delta} \right)}.\] 
	This concludes the proof.
\end{proof}

\section{Omitted Proofs of Section~\ref{sec:stoch}}\label{app:st}
\stre* 
\begin{proof} 
	First, we provide an upper bound on the hindsight stochastic optimum. We define the expected Lagrangian evaluated with respect to the input distribution as: 
	\[\bar{\mathcal{L}}_{f,\bm{\tilde{g}}}(\bm{\lambda} | \mathcal{P}) := \mathbb{E}_{(f, \bm{\tilde{g}}) \sim \mathcal{P}} \left[\max_{x \in \mathcal{X}} \mathcal{L}_{f,\bm{\tilde{g}}}(x, \bm{\lambda}) \right].\]  
	By weak duality, $\text{OPT}(\bm{\gamma}) \leq \max_{x \in \mathcal{X}_t}\sum_{t \in T}\mathcal{L}_{f,\bm{\tilde{g}}}(x, \bm{\lambda})$ for any dual variable $\bm{\lambda} \in \mathbb{R}^{M}_+$. Let $\bm{\bar{\lambda}}_{\tau} := \frac{1}{\tau} \sum_{t \in [\tau]} \bm{\lambda}_t$. We can bound the stochastic optimum as: 
	\begin{align}
		\text{OPT}_{stoc} = \mathbb{E}_{\bm{\gamma}}[\text{OPT}(\bm{\gamma})] &= \frac{\tau}{T}\mathbb{E}_{\bm{\gamma}}[\text{OPT}(\bm{\gamma})] + \frac{T - \tau}{T} \mathbb{E}_{\bm{\gamma}}[\text{OPT}(\bm{\gamma})] \nonumber \\ &\leq \tau \bar{\mathcal{L}}_{f,\bm{\tilde{g}}}(\bm{\bar{\lambda}}_{\tau} | \mathcal{P}) + (T - \tau), \label{opt_bound} \end{align}  
	where we used that rewards are in $[0,1]$. 
	
	Now, we analyze the performance of the algorithm up to the stopping time $\tau$. At each step $t \in [\tau]$, Algorithm~\ref{alg:alg1} maximizes the instantaneous Lagrangian choosing $f_t(x_t) = \max_{x \in \mathcal{X}}\mathcal{L}_{f,\bm{\tilde{g}}}(x, \bm{\lambda}_t) + \langle \bm{\lambda}_t, \bm{\tilde{g}}_t(x) \rangle$. Therefore, we can write the instantaneous reward as: 
	\[f_t(x_t) = \mathcal{L}_{f,\bm{\tilde{g}}}(x_t, \bm{\lambda}_t) + \langle \bm{\lambda}_t, \bm{\tilde{g}}_t(x_t) \rangle.\] 
	
	Let $\Gamma_t = (\gamma_1, ..., \gamma_t)$ denote the history of observed inputs up to time $t$ for each $t \in [\tau]$. Since $\bm{\lambda}_t$ is $\Gamma_{t-1}$-measurable, taking the conditional expectation we obtain: 
	\begin{gather} \mathbb{E} [f_t(x_t) | \Gamma_{t-1}] = \mathbb{E}[\mathcal{L}_{f,\bm{\tilde{g}}}(x_t, \bm{\lambda}_t) | \Gamma_{t-1}] + \mathbb{E}  [\langle \bm{\lambda}_t, \bm{\tilde{g}}_t(x_t) \rangle | \Gamma_{t-1}] \nonumber \\
		\mathbb{E} [f_t(x_t) | \Gamma_{t-1}] = \bar{\mathcal{L}}_{f,\bm{\tilde{g}}}(\bm{\lambda}_t | \mathcal{P}) + \mathbb{E}  [\langle \bm{\lambda}_t, \bm{\tilde{g}}_t(x_t) \rangle | \Gamma_{t-1}]. \label{ma} 
	\end{gather}
	
	To analyze the performance up to the stopping time $\tau$, we consider the martingales \[Z_t = \sum_{s \in [t]} \left( \langle \bm{\lambda}_s, \bm{\tilde{g}}_s(x_s) \rangle - \mathbb{E}[\langle \bm{\lambda}_s, \bm{\tilde{g}}_s(x_s) \rangle | \Gamma_{s-1}] \right)\] and $Q_t = \sum_{s \in [t]} \left( f_s(x_s) - \mathbb{E}[f_s(x_s) | \Gamma_{s-1}] \right)$. Since $\tau$ is a bounded stopping time with respect to $\{\Gamma_t\}_{t=1}^T$, from the Optimal Stopping Theorem we have: 
	\begin{equation}
		\mathbb{E}\left[ \sum_{t \in [\tau]}\langle \bm{\lambda}_t, \bm{\tilde{g}}_t(x_t) \rangle \right] =\mathbb{E}\left[ \sum_{t \in [\tau]} \mathbb{E} [\langle \bm{\lambda}_t, \bm{\tilde{g}}_t(x_t) \rangle | \Gamma_{t-1}] \right] \label{stop1} 
	\end{equation} and 
	\begin{equation}
		\mathbb{E}\left[ \sum_{t \in [\tau]} f_t(x_t) \right] =\mathbb{E}\left[ \sum_{t \in [\tau]} \mathbb{E} [f_t(x_t) | \Gamma_{t-1}] \right]. \label{stop2} \end{equation}
	
	Thus, summing Equation~\eqref{ma} over $t$ and taking the expectation on both sides, we have: 
	\[\mathbb{E}\left[ \sum_{t \in [\tau]} \mathbb{E} [f_t(x_t) | \Gamma_{t-1}]\right] = \mathbb{E}\left[ \sum_{t \in [\tau]}  \bar{\mathcal{L}}_{f,\bm{\tilde{g}}}(\bm{\lambda}_t | \mathcal{P})\right] + \mathbb{E}\left[ \sum_{t \in [\tau]}  \mathbb{E}  [\langle \bm{\lambda}_t, \bm{\tilde{g}}_t(x_t) \rangle | \Gamma_{t-1}] \right] .\]  
	
	From Inequalities~\eqref{stop1}-\eqref{stop2} we can write:    
	\begin{align}
		\mathbb{E}\left[ \sum_{t \in [\tau]}f_t(x_t) \right] &= \mathbb{E}\left[\sum_{t \in [\tau]} \bar{\mathcal{L}}_{f,\bm{\tilde{g}}}(\bm{\lambda}_t | \mathcal{P})\right] + \mathbb{E}\left[ \sum_{t \in [\tau]} \langle \bm{\lambda}_t, \bm{\tilde{g}}_t(x_t) \rangle \right] \nonumber \\
		&\geq \mathbb{E}\left[ \tau \bar{\mathcal{L}}_{f,\bm{\tilde{g}}}(\bm{\bar{\lambda}}_{\tau} | \mathcal{P})\right] + \mathbb{E}\left[ \sum_{t \in [\tau]} \langle \bm{\lambda}_t, \bm{\tilde{g}}_t(x_t) \rangle \right], \nonumber
	\end{align}
	where the inequality holds by convexity of the expected Lagrangian.
	Combining this result with Inequality~\eqref{opt_bound}, we can bound the regret as:   
	\begin{align}
		R_T &= \text{OPT}_{stoc} - \mathbb{E}_{\bm{\gamma}}\left[\sum_{t \in [\tau]} f_t(x_t) \right] \nonumber \\
		& \leq \text{OPT}_{stoc}  - \mathbb{E}\left[ \tau \bar{\mathcal{L}}_{f,\bm{\tilde{g}}}(\bm{\bar{\lambda}}_{\tau} | \mathcal{P})\right] - \mathbb{E}\left[ \sum_{t \in [\tau]} \langle \bm{\lambda}_t, \bm{\tilde{g}}_t(x_t) \rangle \right] \nonumber \\ 
		& \leq \mathbb{E}\left[ (T - \tau)\right]  - \mathbb{E}\left[ \sum_{t \in [\tau]} \langle \bm{\lambda}_t, \bm{\tilde{g}}_t(x_t) \rangle \right] \nonumber \\
		&\leq \mathbb{E}\left[ (T - \tau)\right]  + \mathbb{E} \left[-\sum_{t \in [\tau]} \langle \bm{\mu}, \bm{\tilde{g}}_t(x_t) \rangle + \frac{1}{2\eta} \|\bm{\lambda}_1 - \bm{\mu} \|^2_2 + \frac{\eta}{2} TM \right] \label{d1} \\
		&\leq (T - \tau) -(T - \tau) + \frac{1}{\beta_{\min}} + \frac{1}{2\eta \beta_{\min}^2} +  \frac{\eta}{2} TM \label{rb} \\
		&\leq \frac{1}{\beta_{\min}} + \frac{1}{2\eta \beta_{\min}^2} +  \frac{\eta}{2} TM \nonumber \\ 
		&\leq \frac{1}{\beta_{\min}} + \frac{60 M \sqrt{2 T \ln\left( \frac{T^2}{\delta} \right)}}{2 \beta_{\min}^2} + \frac{\sqrt{T}}{120 \sqrt{2 \ln\left( \frac{T^2}{\delta} \right)}} \nonumber, 
	\end{align} 
	where Inequality~\eqref{rb} holds by Lemma~\ref{r_d}. This concludes the proof.
\end{proof}

\section{Omitted Proofs of Section~\ref{sec:adv}}\label{app:adv}
\advreg*
\begin{proof}
	Let $(x_t^*)_{t=1}^T$ be the optimal sequence of actions and let $(x_t^\circ)_{t=1}^T$ be the sequence of strictly feasible actions that leads to the definition of $\rho_{\text{adv}}$, meaning $\tilde{g}_{t,i}(x^\circ_t) \leq -\rho_{\text{adv}}$ for all $i \in [M]$. Let $\alpha = \frac{\rho_\textnormal{adv}}{1 + \rho_\textnormal{adv}}$. 
	For each round $t \in [T]$, we define $\xi_t^\diamond$ as the strategy that plays $x_t^*$ with probability $\alpha$ and $x_t^\circ$ with probability $1 - \alpha$.
	By construction, the sequence $(\xi_t^\diamond)_{t=1}^T$ is always feasible in expectation. Indeed, for each $i \in [M]$, it holds: 
	\begin{align}
		\mathbb{E}_{x\sim \xi_t^\diamond}[\tilde{g}_{t,i}(x)] = \alpha \cdot \tilde{g}_{t,i}(x_t^*) + (1 - \alpha) \tilde{g}_{t,i}(x_t^\circ) \leq \alpha \cdot 1 + (1 - \alpha)(-\rho_{\text{adv}}) = 0 .\label{alpha_g}
	\end{align}
	Moreover, since rewards are non-negative, it holds: 
	\begin{equation} \mathbb{E}_{x\sim \xi_t^\diamond}[f_t(x)] = \alpha \cdot f_t(x_t^*) + (1 - \alpha) f_t(x_t^\circ) \geq \alpha \cdot f_t(x_t^*). \label{alpha_f} \end{equation}   
	Defining $\tau \in [T]$ as the first time that a resource exceeds the budget, we can decompose the $\alpha$-regret: 
	\begin{align}
		\alpha\text{-}R_T &= \alpha \sum_{t \in [T]} f_t(x_t^*) -\sum_{t \in [\tau]}f_t(x_t) \nonumber\\ 
		&\leq \sum_{t \in [T]}\mathbb{E}_{x\sim\xi_t^\diamond}[f_t(x)] -\sum_{t \in [\tau]}f_t(x_t) \label{exp}\\ 
		&= \sum_{t = \tau+1}^T \mathbb{E}_{x\sim\xi_t^\diamond}[f_t(x)] + \sum_{t \in [\tau]} \mathbb{E}_{x\sim\xi_t^\diamond}[f_t(x)] -\sum_{t \in [\tau]}f_t(x_t) \nonumber\\ 
		&\leq (T - \tau) + \sum_{t \in [\tau]} \left( \mathbb{E}_{x\sim\xi_t^\diamond}[f_t(x)] - f_t(x_t)\right) \nonumber\\ 
		&\leq (T - \tau) + \sum_{t \in [\tau]} \left(\langle \bm{\lambda}_t, \mathbb{E}_{x\sim\xi_{t}^\diamond}[\bm{\tilde{g}}_t(x)] \rangle - \langle \bm{\lambda}_t, \bm{\tilde{g}}_t(x_t)\rangle\right) \label{reg_dec} \\
		&\leq (T - \tau) - \sum_{t \in [\tau]} \langle \bm{\lambda}_t, \bm{\tilde{g}}_t(x_t)\rangle,  \label{optA}
	\end{align}
	where Inequality~\eqref{exp} follows from Inequality~\eqref{alpha_f}, Inequality~\eqref{reg_dec} holds since $x_t$ deterministically maximizes the instantaneous Lagrangian over $\mathcal{X}$
	and Inequality~\eqref{optA} follows from Inequality~\eqref{alpha_g}. 
	From Lemma~\ref{r_d}, similarly to the stochastic case, we have: 
	\begin{align*}
		\alpha\text{-}R_T&\leq (T - \tau) -(T - \tau) + \frac{1}{\beta_{\min}} + \frac{1}{2\eta \beta_{\min}^2} +  \frac{\eta}{2} TM \label{rb} \\
		&\leq \frac{1}{\beta_{\min}} + \frac{1}{2\eta \beta_{\min}^2} +  \frac{\eta}{2} TM \nonumber \\ 
		&\leq \frac{1}{\beta_{\min}} + \frac{60 M \sqrt{2 T \ln\left( \frac{T^2}{\delta} \right)}}{2 \beta_{\min}^2} + \frac{\sqrt{T}}{120 \sqrt{2 \ln\left( \frac{T^2}{\delta} \right)}} \nonumber, 
	\end{align*} 
	which concludes the proof.
\end{proof}

\end{document}